\documentclass[11pt]{article}
\usepackage[margin=1in]{geometry}
\usepackage{amsmath,amssymb,amsthm}
\usepackage{array}
\usepackage{physics}
\usepackage{mathtools}
\usepackage[dvipsnames]{xcolor}
\usepackage[font=small]{caption}
\usepackage{float}
\usepackage{hyperref}
\usepackage{cleveref}
\usepackage{enumitem}
\usepackage{listings}
\usepackage{tcolorbox}
\usepackage{booktabs}
\usepackage{quantikz}
\usetikzlibrary{tikzmark, arrows.meta}

\definecolor{linknavy}{rgb}{0.0, 0.2, 0.5}
\definecolor{leangray}{rgb}{0.97, 0.97, 0.97}
\definecolor{leankw}{rgb}{0.0, 0.0, 0.55}
\definecolor{leancomment}{rgb}{0.0, 0.5, 0.0}
\hypersetup{colorlinks=true, linkcolor=linknavy, citecolor=linknavy, urlcolor=linknavy}

\lstdefinelanguage{Lean4}{
keywords={theorem, lemma, def, noncomputable, intro, exact, fun, by, rw, simp, ring, congr, refine, obtain, have, show, unfold, apply, decide, namespace, end, open, import, abbrev, if, then, else, with}, keywordstyle=\color{leankw}\bfseries, comment=[l]{--}, commentstyle=\color{leancomment}\itshape, morestring=[b]", literate= {ℝ}{{$\mathbb{R}$}}1 {ℂ}{{$\mathbb{C}$}}1 {ℕ}{{$\mathbb{N}$}}1 {α}{{$\alpha$}}1 {β}{{$\beta$}}1 {η}{{$\eta$}}1 {θ}{{$\theta$}}1 {π}{{$\pi$}}1 {ψ}{{$\psi$}}1 {ℓ}{{$\ell$}}1 {≤}{{$\leq$}}1 {≥}{{$\geq$}}1 {≠}{{$\neq$}}1 {∀}{{$\forall$}}1 {∃}{{$\exists$}}1 {∑}{{$\sum$}}1 {∏}{{$\prod$}}1 {→}{{$\rightarrow$}}1 {↦}{{$\mapsto$}}1 {⟨}{{$\langle$}}1 {⟩}{{$\rangle$}}1 {∧}{{$\wedge$}}1 {∨}{{$\vee$}}1 {¬}{{$\neg$}}1 {σ}{{$\sigma$}}1 {∈}{{$\in$}}1 {₀}{{$_{0}$}}1 {ᶠ}{{$^{f}$}}1, }

\theoremstyle{plain}
\newtheorem{theorem}{Theorem}
\newtheorem{lemma}[theorem]{Lemma}
\newtheorem{proposition}[theorem]{Proposition}
\newtheorem{corollary}[theorem]{Corollary}

\theoremstyle{definition}
\newtheorem{definition}[theorem]{Definition}
\newtheorem{remark}[theorem]{Remark}
\newtheorem{construction}[theorem]{Construction}
\newtheorem{assumption}[theorem]{Assumption}

\newcommand{\softmax}{\operatorname{softmax}}
\newcommand{\Real}{\operatorname{Re}}
\newcommand{\cSWAP}{\operatorname{CSWAP}}
\newcommand{\HilbertA}{\mathcal{H}_A}
\newcommand{\HilbertB}{\mathcal{H}_B}

\newcommand{\HilbertS}{\mathcal{H}_S}

\title{A Quantum Roadmap for Softmax Attention: Exact Born-Rule Analogs for Softmax Attention on the Probability Simplex}
\author{
  Eric A. F. Reinhardt$^1$ and Adam J. Hauser$^1$ \\
  $^1$Department of Physics and Astronomy \\
  University of Alabama, Tuscaloosa, AL 35487, USA \\
  \texttt{eareinhardt@crimson.ua.edu} \\
}
\date{\today}

\newsavebox{\hdlScore}
\newsavebox{\hdlTemp}
\newsavebox{\hdlSoft}
\newsavebox{\hdlVal}
\newsavebox{\hdlGate}

\begin{document}

\sbox{\hdlScore}{%
\begin{quantikz}[column sep=0.16cm, row sep=0.13cm]
\lstick{\scriptsize$|0\rangle_h$} & \gate{H} & \ctrl{1} & \gate{H} & \meter{} \\
\lstick{\scriptsize$|0\rangle_B$} & & \gate{\tilde V_Q / \tilde V_K} & &
\end{quantikz}}

\sbox{\hdlTemp}{%
\begin{tabular}{@{}c@{}}
\begin{quantikz}[column sep=0.16cm, row sep=0.13cm]
\lstick{\scriptsize$|i,j\rangle$} & \ctrl{1} & \\
\lstick{\scriptsize$|0\rangle_{h^{(\ell)}}$} & \gate{R_y(\theta^{(\ell)}_{ij})} & \meter{|0\rangle}
\end{quantikz} \\[-1pt]
{\scriptsize $L$ post-selected rounds}
\end{tabular}}

\sbox{\hdlSoft}{%
\begin{quantikz}[column sep=0.16cm, row sep=0.13cm]
\lstick{\scriptsize$|i\rangle_A$} & \ctrl{1} & \\
\lstick{\scriptsize$|0\rangle_S$} & \gate{U_{\mathrm{attn}}(\boldsymbol\theta_i)} & \meter{}
\end{quantikz}}

\sbox{\hdlVal}{%
\begin{quantikz}[column sep=0.16cm, row sep=0.13cm]
\lstick{\scriptsize$|j\rangle_S$} & \ctrl{1} & & \trash{} \\
\lstick{\scriptsize$|0\rangle_{B'}$} & \gate{U_{\mathrm{enc}}} & \ctrl{1} & \trash{} \\
\lstick{\scriptsize$|0\rangle_{B_V}$} & & \gate{U_V} &
\end{quantikz}}

\sbox{\hdlGate}{%
\begin{quantikz}[column sep=0.16cm, row sep=0.13cm]
\lstick{\scriptsize$|0\rangle_C$} & \gate{R_y(\eta)} & \ctrl{1} & \gate{H} & \meter{|0\rangle} \\
\lstick{\scriptsize$|X\rangle$} & & \gate{U_{\mathrm{full}}} & &
\end{quantikz}}

\maketitle

\begin{abstract}
The attention mechanism forms the foundation of many modern AI models such as the Transformer. In one subclass of problems where attention is used, inputs and outputs are bound to the probability simplex so that all outputs sum to one. In this setting, softmax attention admits an exact, component-by-component quantum realization. Attention scores are Hadamard-test statistics on block-encoded projections of amplitude-encoded inputs. The exponential softmax is the interior of a cosine-squared family generated by Born-rule measurement under an exact bijection, whose boundary expresses sparse attention with exact zeros at finite parameter values. The softmax temperature is a repetition count where post-selected measurement rounds realize discretized inverse temperature exactly. Value aggregation is a deterministic column-loading channel that dilates the column-stochastic value matrix. The gated residual is the preparation angle of a single ancilla, with the additive identity at a mixing angle of $\pi/2$. Every learnable parameter is a rotation-gate angle. The composed layer is exact in the infinite-shot limit with one measure-and-reload step per attention score; a fully-coherent variant is $\varepsilon$-approximate via quantum singular value transformation in the infinite depth limit. The algebraic core is machine-checked in Lean 4.
\end{abstract}

\begin{figure}[!ht]
\begin{center}
\tikzmarknode{cSoft}{\scalebox{0.8}{\usebox{\hdlSoft}}}\hspace{0.9cm}%
\tikzmarknode{cTemp}{\scalebox{0.8}{\usebox{\hdlTemp}}}\hspace{0.9cm}%
\tikzmarknode{cScore}{\scalebox{0.8}{\usebox{\hdlScore}}}

\vspace{0.7cm}
{\Large $Y_i^{\mathrm{out}} \;=\; \tikzmarknode{fGate}{\cos^2\!\tfrac{\eta}{2}\; x_i \;+\; \sin^2\!\tfrac{\eta}{2}}\, \displaystyle\sum_j \tikzmarknode{fSoft}{\softmax_j}\!\big(\, \tikzmarknode{fTemp}{\beta}\; \tikzmarknode{fScore}{\langle q_i | k_j \rangle}\, \big)\; \tikzmarknode{fVal}{W_V\, x_j}$}

\vspace{0.3cm}
\tikzmarknode{cGate}{\scalebox{0.8}{\usebox{\hdlGate}}}\hspace{2.2cm}%
\tikzmarknode{cVal}{\scalebox{0.8}{\usebox{\hdlVal}}}
\end{center}
\begin{tikzpicture}[overlay, remember picture]
\draw[-{Stealth}, gray!75, semithick] (cSoft.south) to[out=-90, in=110] (fSoft.north);
\draw[-{Stealth}, gray!75, semithick] (cTemp.east) to[out=-90, in=90] (fTemp.north);
\draw[-{Stealth}, gray!75, semithick] (cScore.south) to[out=-90, in=70] (fScore.north);
\draw[-{Stealth}, gray!75, semithick] (cGate.north) to[out=90, in=-110] (fGate.south);
\draw[-{Stealth}, gray!75, semithick] (cVal.east) to[out=90, in=-70] (fVal.south);
\end{tikzpicture}
\caption{The dictionary at a glance: the gated single-head attention layer on the probability simplex, annotated with the circuit realizing each component ($q_i \propto W_Q\sqrt{x_i}$, $k_j \propto W_K\sqrt{x_j}$; at $\eta = \pi/2$, $\beta = 1$ the formula reduces to Eq.~\ref{eq:classical} under the conventions of Remark~\ref{rem:conventions}). Softmax: compiled $\cos^2$ Born readout with $\theta_{ij} = 2\arccos(e^{z_{ij}/2})$ (Proposition~\ref{prop:cossoftmax}); temperature: $L$ post-selected rounds (Corollary~\ref{cor:temperature}); scores: Hadamard test (Lemma~\ref{lem:overlap}); gating and residual: single-ancilla preparation angle (Theorem~\ref{thm:gated-residual}); values: column-loading channel (Theorem~\ref{thm:weighted-sum}).}
\label{fig:headline}
\end{figure}

\section{Introduction}
\label{sec:intro}

Transformer self-attention~\cite{vaswani2017attention} has become the dominant architecture in modern machine learning, underlying state-of-the-art large language models and a growing fraction of vision, audio, and scientific-computing systems. Its central operation is dot-product attention with a softmax non-linearity over inner-product scores, an additive residual connection~\cite{he2016deep,srivastava2015highway} between sublayer input and output, and a value aggregation step. The success of this architecture has motivated extensive theoretical and empirical studies of its components, including alternatives to the exponential softmax~\cite{choromanski2020rethinking,tsai2019transformer}.

Quantum machine learning~\cite{schuld2018supervised,bromley2020applications} offers an orthogonal computational substrate. Variational quantum circuits operate natively on the unit sphere of a complex Hilbert space, with the Born rule supplying probability normalization without an explicit projection step. This is suggestive: classical transformer attention enforces the simplex structure of its output by applying a softmax, while a quantum circuit's output lives on the simplex by construction. Special cases of full attention bound the total output of the model to a probability simplex or a collection of simplices where all outputs are enforced to sum to one. This probability simplex setting is actively studied classically: flow matching on Riemannian manifolds~\cite{lipman2023flow,chen2024riemannian} extends naturally to the simplex via the Fisher--Rao metric, Dirichlet flow matching~\cite{avdeyev2024dirichlet} parameterizes simplex-valued generative dynamics directly, and continuous diffusion for categorical data~\cite{dieleman2022continuous} operates on simplex-valued representations of discrete tokens. Whether the structural alignment between this regime and quantum amplitude encoding yields a faithful quantum analogue of transformer attention, and at what operational cost, is the question this paper addresses for the single-head case.

\paragraph{Related work on quantum attention.}
Several recent constructions implement transformer attention as a quantum circuit. Cherrat et al.~\cite{cherrat2024qvit} build quantum vision transformers from parametrized orthogonal layers and matrix-loading circuits, with one variant realizing attention via compound matrices. QSAN~\cite{shi2022qsan} replaces classical inner-product similarity with a quantum-logic similarity and represents attention scores as a density matrix produced by circuit evolution. QCSAM~\cite{chen2025qcsam} realizes self-attention via complex linear combinations of unitaries (LCU) with complex-valued attention weights. Guo et al.~\cite{guo2024quantumtransformer} give a fault-tolerant route in which self-attention, residual, and feed-forward layers are computed entirely by block-encoded quantum linear-algebra subroutines on amplitude-encoded inputs. None of these works states the cosine-squared / exponential softmax bijection on the simplex interior, characterizes the additive residual as a $\pi/2$ slice of a single-ancilla LCU family, or assembles these into an exact single-head equivalence theorem; the construction we present is complementary to and distinct from these prior architectures.

\paragraph{Contributions.}
We give a self-contained construction of a single-head gated transformer attention layer in a probability-space setting compatible with quantum amplitude encoding. The construction is built from six primitives: amplitude encoding, block-encoded query/key projections~\cite{gilyen2019quantum,childs2017quantum}, coherent inner-product computation via the Hadamard test, a softmax map realized by Born-rule readout of controlled-$R_y$ rotations (denoted $\mathrm{cR}_y$, where $R_y(\theta) = e^{-i\theta Y/2}$ is the single-qubit $Y$-axis rotation), value aggregation via a controlled column-loading channel (a Stinespring dilation of the column-stochastic $W_V$), and a gated single-ancilla residual realized by a LCU~\cite{childs2012hamiltonian} with parameterized ancilla preparation. Three consequences of this construction are notable:

\begin{enumerate}[leftmargin=*]
\item \textbf{Cosine-squared softmax is isomorphic to exponential softmax on the simplex interior, and strictly extends it at the boundary.} We prove the angle-score bijection $\theta = 2\arccos(e^{z/2})$ realizes the equality of the two softmax maps pointwise on the open simplex. At the boundary, the cosine-squared family expresses exact zeros (sparse attention) at finite parameter values, which the exponential family cannot reach with bounded scores.
\item \textbf{The classical additive residual identity is the $\eta = \pi/2$ special case of a parameterized single-ancilla construction.} A single ancilla prepared in $\cos(\eta/2)|0\rangle + \sin(\eta/2)|1\rangle$, controlled-$U$, Hadamard, and post-select-$|0\rangle$ produces the normalized linear combination $\cos(\eta/2)|X\rangle + \sin(\eta/2) U|X\rangle$, recovering the no-residual ($\eta = 0$), additive ($\eta = \pi/2$), and full-sublayer ($\eta = \pi$) cases as specific gate-angle settings.
\item \textbf{The softmax temperature is measurement repetition.} $L$ identical rounds of the post-selected Born stage realize exp-softmax at inverse temperature $\beta = L$ exactly, and positive measurement weights summing to $\beta$ realize any real temperature (Theorem~\ref{thm:trotter}, Corollary~\ref{cor:temperature}); the temperature axis, a hyperparameter classically, is a physical repetition count in the quantum realization.
\end{enumerate}

\Cref{fig:headline} annotates the layer formula with the circuit realizing each component; Table~\ref{tab:dictionary} in Section~\ref{sec:discussion} collects the full component-by-component dictionary, with the exactness status and certifying result for each entry.

The headline algebraic content (the angle-score bijection, the boundary strict extension, the multi-stage post-selection identity with its inverse-temperature generalization, the gated single-ancilla circuit, the stochastic-value-channel diagonal identity, and the polynomial-exponential impossibility behind the coherent no-go) is machine-checked in Lean 4~\cite{demoura2021lean,mathlib2020} against the mathlib library. The corresponding Lean theorem statements and proofs are in Appendix~\ref{sec:leancode}. A successful Lean kernel build is the type-theoretic certificate that every listed theorem holds.

\paragraph{Scope and limitations.}
Two restrictions are essential to the construction. First, the input-output domain is the probability simplex (or its Born-rule lift to the positive orthant of the unit sphere), not generic feature space; this matches the natural setting for quantum amplitude encoding and resolves several normalization issues that would otherwise complicate the equivalence. Second, the construction uses one \emph{measure-and-reload step} per attention scoring: the Hadamard-test outcome $z_{ij}$ is sampled, a classical function is applied to obtain the controlled-$R_y$ rotation angle $\theta_{ij} = 2\arccos(e^{z_{ij}/2})$, and $\theta_{ij}$ is reloaded as a gate parameter for the softmax step. Measure-and-reload is a standard quantum-classical interface pattern, the same mechanism used in adaptive measurements, magic-state distillation, and quantum error correction. The overhead it introduces (statistical noise from finite shots, classical-control latency, mid-circuit measurement) is in principle separately optimizable, e.g., via amplitude estimation, low-latency control electronics, or parallel state preparations. The equivalence is exact in the infinite-shot limit; a fully-coherent alternative that removes the measure-and-reload step is $\varepsilon$-approximate via QSVT polynomial dressing (Theorem~\ref{thm:qsvt}). Exact finite-depth fully-coherent realization without measure-and-reload is provably impossible for affinely-encoded scores within the QSVT circuit family (Theorem~\ref{thm:no-go}). The construction also relies on standard variational-circuit expressivity assumptions: specifically, that a sufficiently deep hardware-efficient rotation-CRY circuit class realizes a dense subset of $SU(2^{b+1})$, hence of contractions on the data register. The underlying gate set (arbitrary single-qubit rotations together with an entangling two-qubit gate on a connected coupling graph) is universal~\cite{barenco1995elementary,bremner2002practical}; Schuld, Sweke, and Meyer~\cite{schuld2021effect} give the complementary function-space statement that data-reuploading circuits realize truncated Fourier series and are universal approximators on bounded domains. We adopt the density as an architectural assumption on the finite-depth family rather than a free claim; the variational realization of the block-encoded $W$ family (Construction~\ref{con:rot-ansatz}, Lemma~\ref{lem:rot-ansatz-realization}) is conditional on this density. The equivalence theorem itself does not require it: the dilation unitary of Lemma~\ref{lem:embed} admits an exact decomposition into rotation gates and CNOTs by standard synthesis~\cite{shende2006synthesis}, so exactness rests only on the infinite-shot limit of the measure-and-reload step.

\paragraph{Organization.}
Section~\ref{sec:setup} establishes notation, the probability-space assumption, and the quantum register layout. Section~\ref{sec:encoding} defines the Born-rule encoding. Section~\ref{sec:qkv} introduces block-encoded unitary projections. Section~\ref{sec:overlaps} treats coherent inner-product computation. Section~\ref{sec:softmax} establishes the cosine-squared softmax via Born-rule normalization, proves the interior isomorphism with exponential softmax, and characterizes the boundary strict-extension behavior. Section~\ref{sec:weighted-sum} treats value aggregation, and Section~\ref{sec:residual} establishes the gated single-ancilla LCU residual circuit. Section~\ref{sec:equiv} states the single-head master equivalence theorem. Section~\ref{sec:lean} discusses the formal verification. Section~\ref{sec:quantumness} examines the architectural status of the construction. Section~\ref{sec:practicality} discusses practicality and big-O notation scaling behavior for the construction. Section~\ref{sec:discussion} discusses implications, and Section~\ref{sec:conclusion} concludes. A final appendix section with Lean code is included in Appendix~\ref{sec:leancode}.

\section{Setup and assumptions}
\label{sec:setup}

\subsection{Classical single-head attention with residual}

In standard transformer architectures, multi-head attention concatenates several independent single-head computations, each with its own learned $W_Q, W_K, W_V$. The construction in this paper treats a single head; extension to multi-head is the obvious parallel composition (one set of variational parameters and one set of ancilla registers per head). We use the unqualified term ``attention'' throughout to refer to single-head attention.

For input $X \in \mathbb{R}^{n\times d}$ with rows $x_i$, the classical single-head self-attention layer with residual computes
\begin{align}
\label{eq:classical}
Q_i &= W_Q x_i, \quad K_i = W_K x_i, \quad V_i = W_V x_i, \\
A_{ij} &= \softmax_j\!\left(\tfrac{Q_i^\top K_j}{\sqrt{d}}\right), \\
Y_i &= \sum_j A_{ij} V_j, \\
Y_i^{\mathrm{out}} &= \tfrac12(Y_i + x_i).
\end{align}

\subsection{Probability simplex setting}

The classical attention representation shown in Eq.~\ref{eq:classical} does not place any constraint on the rows of $X$. A quantum register, on the other hand, consists of only unit-norm amplitude vectors and a Born-rule readout of the register returns probability vectors. We therefore place the assumption that the classical layer (that our model proposes a quantum equivalence for) lies on the probability simplex including its inputs, values, and outputs. This assumption will be used in the rest of the paper.

\begin{assumption}[Probability-space attention]
\label{assn:simplex}
Inputs $X \in \mathbb{R}^{n \times d}$ are row-normalized probability distributions: each row $x_i \geq 0$ component-wise and $\sum_j x_{ij} = 1$. The amplitude lift $\sqrt{x_i} \in S^{d-1}_+$ lies on the positive orthant of the unit sphere. The value matrix $W_V$ is column-stochastic (each column is a probability vector), so that $V_j = W_V x_j$ remains on the simplex.
\end{assumption}

This assumption matches the natural setting for quantum amplitude encoding of probability data (e.g., images interpreted as probability distributions under the Born rule) and resolves three potential issues simultaneously: row-normalization is automatic; the mixed-state output of partial trace is directly the desired probability vector; the LCU residual operates on the same simplex as the classical residual.

\begin{remark}[Lifted-feature, normalized-score, and row-shift conventions]
\label{rem:conventions}
Three conventions connect Eq.~\ref{eq:classical} to the quantum primitives that follow. (i) \emph{Lifted features.} The block-encoded projections act on the amplitude lift: the circuit computes $W_Q \sqrt{x_i}$ (componentwise square root), not $W_Q x_i$. Throughout, Eq.~\ref{eq:classical} is therefore read with the convention that the feature vector presented to the projections is the lift $\sqrt{x_i}$; equivalently, the classical layer is defined on the amplitude (Hellinger) representation of each simplex point. (ii) \emph{Normalized scores.} Post-selected block encodings output normalized states, so the Hadamard test of Lemma~\ref{lem:overlap} returns the normalized overlap $z_{ij} = \Real\langle q_i|k_j\rangle$ with $\|q_i\| = \|k_j\| = 1$; the score in Eq.~\ref{eq:classical} is read as this normalized bilinear form. The input-dependent norms $\|W_* \sqrt{x_i}\|$ are observable as block-encoding post-selection rates and can be reintroduced classically if unnormalized scores are required, at additional shot cost. The $1/\sqrt d$ temperature is a fixed scalar absorbed into the classical conversion of the measure-and-reload step; equivalently, it can be realized physically as post-selected measurement weight (Corollary~\ref{cor:temperature}). (iii) \emph{Row shift.} The overlap $z_{ij} \in [-1, 1]$ can be positive, while $\theta = 2\arccos(e^{z/2})$ requires $z \leq 0$; the reload step therefore first applies the softmax-invariant row shift $z_{ij} \mapsto z_{ij} - \max_l z_{il}$. Operationally, all $n$ scores of row $i$ are collected before any angle for that row is reloaded.
\end{remark}

\subsection{Quantum registers}

Each of the following sections introduces registers. The total layout of all registers used in the rest of the work is outlined here for reference. Set $a = \lceil \log_2 n\rceil$, $b = \lceil \log_2 d\rceil$. The construction uses:
\begin{center}
\begin{tabular}{lll}
\toprule
register & qubits & role \\
\midrule
$A$, $A'$ & $a$ each & token query, key indices \\
$B_Q, B_K$ & $b$ each & per-token feature amplitudes (score pathway) \\
$B'$ & $b$ & value-channel feature register, traced out (may recycle $B_K$) \\
$B_V$ & $b$ & value-channel output; diagonal carries $Y_i$ \\
$\mathrm{anc}_Q, \mathrm{anc}_K$ & 1 each, recyclable via reset & block-encoding ancillas (the value channel needs none) \\
$S$ & $a$ & attention-weight register \\
$h$ & 1 & overlap (Hadamard-test) ancilla, reusable \\
$h^{(\ell)}$ & 1, reused & Trotter post-selection ancilla \\
$C$ & 1 & residual gating ancilla (the single ancilla) \\
\bottomrule
\end{tabular}
\end{center}

\section{Encoding (Born-rule lift)}
\label{sec:encoding}

The first primitive we introduce loads the classical input into the quantum register. Under Assumption~\ref{assn:simplex}, each row $x_i$ is a probability vector so its componentwise square root is a unit vector. With that assumption, $\sqrt{x_i}$ can serve directly as the amplitudes of a $b$-qubit state and a noiseless measurement of that state in the computational basis returns $x_i$. The sequence is loaded as a uniform superposition over the token index so that every downstream operation acts on all tokens simultaneously.

\begin{definition}[Sequence-amplitude encoding]
\label{def:encoding}
Define $|X\rangle_{AB} \in \HilbertA \otimes \HilbertB$ by
\begin{equation}
|X\rangle_{AB} \;=\; \frac{1}{\sqrt{n}}\sum_{i=0}^{n-1} |i\rangle_A \otimes |x_i\rangle_B,
\qquad |x_i\rangle_B \;=\; \sum_{j=0}^{d-1} \sqrt{x_{ij}}\,|j\rangle_B.
\end{equation}
The state preparation is unitary; the marginal over $A$ is uniform $1/n$, and the conditional state on $B$ given outcome $i$ is the unit-amplitude lift of token $x_i$.
\end{definition}

The Born-rule simplex constraint $\sum_i |\alpha_i|^2 = 1$ replaces the classical softmax simplex projection automatically: $|x_i\rangle_B$ is unit-norm by construction. This is visualized in \Cref{fig:bornrule} where B represents the Born-rule lift operation state preparation unitary.

\begin{figure}
\begin{center}
\begin{quantikz}[column sep=0.6cm, row sep=0.25cm]
\lstick{$|0\rangle$} & \gate[4]{B_{x_i}} & \rstick[4]{$|x_i\rangle_B$} \\
\lstick{$|0\rangle$} & & \\
\lstick{$\vdots$}\setwiretype{n} & & \\
\lstick{$|0\rangle$} & &
\end{quantikz}
\end{center}
\caption{Born-rule encoding diagram.}
\label{fig:bornrule}
\end{figure}

\section{Q, K projections via block encoding}
\label{sec:qkv}

With $|x_i\rangle_B$ loaded, the next classical operation is the projection to queries and keys, $W_Q x_i$ and $W_K x_i$. Neither $W_Q$ nor $W_K$ need preserve norm, whereas every unitary does, so a generic real $W \in \{W_Q, W_K\}$ cannot be realized directly as a unitary on $B$. The solution to this is to embed $W$ as a block of a larger unitary acting on $B$ with one ancilla. Post-selecting the ancilla on $|0\rangle$ gives a state which is proportional to $W|x_i\rangle$ on $B$ (block encoding, \Cref{fig:qkvproj}). This section first states the existence of that unitary (Lemma~\ref{lem:embed}) and then gives the trainable circuit family that realizes this (Section~\ref{sec:variational-block-encoding}).

\begin{lemma}[Block-encoded unitary]
\label{lem:embed}
For any $W \in \mathbb{R}^{d \times d}$ with $\|W\|_{\mathrm{op}} \leq 1$, there exists a unitary $V \in U(2^{b+1})$ on $b+1$ qubits (with $W$ zero-padded to dimension $2^b$ when $d < 2^b$) such that
\begin{equation}
\langle 0|_{\mathrm{anc}} \otimes \langle s'|_B \cdot V \cdot |0\rangle_{\mathrm{anc}} \otimes |s\rangle_B \;=\; W_{s' s},
\end{equation}
i.e., the upper-left $d \times d$ block of $V$ equals $W$.
\end{lemma}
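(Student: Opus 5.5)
The plan is to use the standard unitary dilation of a contraction (the Halmos construction). First I would zero-pad $W$ to a $D\times D$ matrix with $D = 2^b$, placing $W$ in the upper-left $d\times d$ corner and zeros elsewhere. Padding does not change the operator norm, so the padded matrix $\tilde W$ still satisfies $\|\tilde W\|_{\mathrm{op}} \le 1$. Because $I - \tilde W^\top \tilde W$ and $I - \tilde W\tilde W^\top$ are then positive semidefinite, both have well-defined positive semidefinite square roots $\sqrt{I - \tilde W \tilde W^\top}$ and $\sqrt{I - \tilde W^\top \tilde W}$.

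Next I would define the $2D\times 2D$ block matrix, ordering the ancilla as the most significant qubit so that the $\mathrm{anc}=0$ sector is the upper-left block:
\begin{equation}
V \;=\; \begin{pmatrix} \tilde W & \sqrt{I - \tilde W \tilde W^\top} \\ \sqrt{I - \tilde W^\top \tilde W} & -\tilde W^\top \end{pmatrix}.
\end{equation}
By construction the $\langle 0|_{\mathrm{anc}}\cdots|0\rangle_{\mathrm{anc}}$ block is $\tilde W$. Its entries with $s, s' < d$ are exactly $W_{s's}$, which is the claimed matrix-element identity.

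The remaining work is to check that $V^\top V = I$. Since $V$ is real, this gives $V \in O(2D) \subset U(2^{b+1})$. Multiplying out the blocks gives:
\begin{itemize}
\item diagonal blocks $\tilde W^\top \tilde W + (I - \tilde W^\top \tilde W) = I$ and $(I - \tilde W \tilde W^\top) + \tilde W \tilde W^\top = I$;
\item off-diagonal blocks $\tilde W^\top \sqrt{I - \tilde W\tilde W^\top} - \sqrt{I - \tilde W^\top \tilde W}\,\tilde W^\top$ and its transpose.
\end{itemize}

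The main obstacle is the intertwining identity $\tilde W^\top f(\tilde W \tilde W^\top) = f(\tilde W^\top \tilde W)\,\tilde W^\top$ for $f(t) = \sqrt{1-t}$, which makes the off-diagonal blocks vanish. I would prove it first for polynomials $f$, where it follows by induction from $\tilde W^\top(\tilde W\tilde W^\top)^k = (\tilde W^\top \tilde W)^k \tilde W^\top$. I would then extend it to $\sqrt{1-t}$ on the common spectrum $[0,1]$ by polynomial approximation (Weierstrass). Alternatively, one can argue directly from the SVD $\tilde W = U\Sigma R^\top$, in which both square roots are $U\sqrt{I-\Sigma^2}U^\top$ and $R\sqrt{I-\Sigma^2}R^\top$ and the identity is immediate.

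A cleaner alternative route is to start from the SVD, $\tilde W = U\Sigma R^\top$ with $0 \le \sigma_k \le 1$. One writes $\sigma_k = \cos\phi_k$ and sets
\begin{equation}
V = (I_2 \otimes U)\,\Big(\textstyle\bigoplus_k R_y(-2\phi_k)\Big)\,(I_2\otimes R^\top)
\end{equation}
after reordering qubits. This is manifestly unitary and is built from $y$-rotations, which fits the paper's rotation-gate viewpoint. I would present the Halmos form as the proof and mention the SVD form as the constructive decomposition.
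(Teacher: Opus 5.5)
Your proof is correct, and it uses the same idea as the paper: unitary dilation of a contraction. The execution differs. The paper only asserts that the $*$ blocks of $V = \begin{pmatrix} W & * \\ * & * \end{pmatrix}$ can be filled in ``by orthogonalizing rows/columns.'' You instead write down the Halmos dilation in closed form and verify $V^\top V = I$ block by block.

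Your version is the more informative of the two. It shows exactly where $\|W\|_{\mathrm{op}} \le 1$ enters: the defect operators $I - \tilde W^\top \tilde W$ and $I - \tilde W \tilde W^\top$ must be positive semidefinite. It also makes explicit a step the paper's sketch leaves implicit. Before any orthogonal completion, the lower-left block $B$ must be chosen with $B^\top B = I - W^\top W$, so that the first block column is an isometry. Gram--Schmidt applied to the columns of a non-isometric $W$ cannot produce this while keeping $W$ fixed. Only after that step can the remaining columns be completed by orthogonalization.

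Your SVD variant additionally gives a structured decomposition: a multiplexed $R_y$ on the ancilla sandwiched between orthogonal basis changes on $B$. This fits the paper's view that every parameter is a rotation angle.

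Two small remarks:
\begin{itemize}
\item The Weierstrass step is unnecessary. The SVD computation already proves the intertwining identity $\tilde W^\top f(\tilde W\tilde W^\top) = f(\tilde W^\top\tilde W)\,\tilde W^\top$ in one line, because the diagonal factors commute.
\item The paper's closing comment about rescaling by $\|W\|_{\mathrm{op}}$ lies outside the lemma's hypothesis, so you were right to omit it.
\end{itemize}
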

\begin{proof}
$V = \begin{pmatrix} W & * \\ * & * \end{pmatrix}$ exists for any $W$ with $\|W\|_{\mathrm{op}} \leq 1$; the $*$ blocks are constructed by orthogonalizing rows/columns to make $V V^\dagger = I$. The operator-norm bound is enforced classically by absorbing a scale factor: encode $W/\|W\|_{\mathrm{op}}$ and store the scale as a classical multiplier on the readout.
\end{proof}

\begin{corollary}[Q, K via block encodings]
\label{cor:qkv-embed}
For classical weight matrices $W_Q, W_K$, there exist unitaries $\tilde V_Q, \tilde V_K$ (on $b+1$ qubits each, one block-encoding ancilla per matrix) such that, when applied to the encoded input,
\begin{equation}
(\langle 0|_{\mathrm{anc}_*} \otimes I_B) \tilde V_* (|0\rangle_{\mathrm{anc}_*} \otimes |x_i\rangle_B) = W_* |x_i\rangle_B,
\end{equation}
for $* \in \{Q, K\}$, i.e., the post-selected $B$-register state is the linear action of $W_*$ on the encoded amplitude vector $|x_i\rangle_B = \sum_j \sqrt{x_{ij}}|j\rangle$. The value matrix $W_V$ is deliberately not realized by block encoding: the score pathway needs the amplitude-level states $W_*\sqrt{x_i}$, which block encoding provides, whereas value aggregation needs the probability-level $W_V x_j$, which it cannot provide (Construction~\ref{con:value-load}); the column-stochastic $W_V$ is instead realized exactly as a quantum channel in Section~\ref{sec:weighted-sum}.
\end{corollary}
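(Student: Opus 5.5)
The plan is to reduce the corollary to Lemma~\ref{lem:embed}, applied once to $W_Q$ and once to $W_K$, and then extend the entrywise identity to the encoded input by linearity.

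\emph{Normalization.} First ensure the hypothesis $\|W\|_{\mathrm{op}}\le 1$ of Lemma~\ref{lem:embed} holds. If $\|W_*\|_{\mathrm{op}}>1$, replace $W_*$ by $\hat W_* = W_*/\|W_*\|_{\mathrm{op}}$ and keep the scalar $\|W_*\|_{\mathrm{op}}$ as a classical multiplier on the readout, as the proof of Lemma~\ref{lem:embed} already prescribes. This rescaling is harmless downstream, because by Remark~\ref{rem:conventions}(ii) the score pathway only uses the normalized post-selected state. Then zero-pad $\hat W_*$ to $2^b\times 2^b$. Applying Lemma~\ref{lem:embed} produces a unitary $\tilde V_*\in U(2^{b+1})$ on $\mathrm{anc}_*\otimes B$ whose $\langle 0|_{\mathrm{anc}}$--$\langle 0|_{\mathrm{anc}}$ block equals $\hat W_*$ entrywise.

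\emph{Linearity.} Next, pass from matrix entries to states. For any basis vector $|s\rangle_B$, insert a resolution of the identity on $B$:
\[
(\langle 0|\otimes I_B)\tilde V_*(|0\rangle\otimes|s\rangle)=\sum_{s'}|s'\rangle\,\big(\langle 0|\otimes\langle s'|\big)\tilde V_*\big(|0\rangle\otimes|s\rangle\big)=\sum_{s'}(\hat W_*)_{s's}|s'\rangle=\hat W_*|s\rangle .
\]
The map $|\psi\rangle\mapsto(\langle 0|\otimes I)\tilde V_*(|0\rangle\otimes|\psi\rangle)$ is linear, so expanding $|x_i\rangle_B=\sum_j\sqrt{x_{ij}}|j\rangle$ from Definition~\ref{def:encoding} gives the displayed identity. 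The padded coordinates $j\ge d$ carry zero amplitude and do not contribute. Undoing the stored scale then yields $W_*|x_i\rangle_B$ in the sense of the statement. The post-selection probability is $\|\hat W_*|x_i\rangle\|^2$, and this is the norm that Remark~\ref{rem:conventions} says can be reintroduced classically.

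\emph{Main obstacle.} The hardest point is stating honestly the sense in which the identity holds when $\|W_*\|_{\mathrm{op}}>1$, since a unitary block cannot literally equal $W_*$. I would handle this by stating the identity for $\hat W_*$ and noting that it holds exactly up to the known classical scalar.

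\emph{Remark on $W_V$.} The clause about $W_V$ is a design remark, not a mathematical claim, so the proof only needs to point forward. Block encoding acts linearly on amplitudes, giving $W_V\sqrt{x_j}$. Its Born readout is $|W_V\sqrt{x_j}|^2$, which in general is not $W_Vx_j$. Hence value aggregation is deferred to the channel of Section~\ref{sec:weighted-sum}.
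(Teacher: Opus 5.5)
Your proposal is correct and matches the paper's intended argument: the paper gives no separate proof of this corollary and treats it as a direct application of Lemma~\ref{lem:embed} to $W_Q$ and $W_K$, with the scale absorbed as in that lemma's proof, extended to $|x_i\rangle_B$ by linearity; your $W_V$ remark mirrors the paper's observation that $W_V\sqrt{x}\neq\sqrt{W_V x}$. Your explicit caveat that the identity holds only up to the stored classical scalar when $\|W_*\|_{\mathrm{op}}>1$ is a fair sharpening, since the statement does not literally say this.
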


\begin{remark}[Parallel projection]
Applying $\tilde V_Q$ on $B \otimes \mathrm{anc}_Q$ acts in parallel on every $|x_i\rangle$ in the superposition: a classical projection cost of $\Theta(nd^2)$ is replaced by a single unitary application whose gate-count and depth depend on the structure of $W_Q$ via the chosen block-encoding scheme; for a generic dense $W_Q$ the depth scales polynomially in $d$, while sparse or structured $W_Q$ admit $\mathrm{polylog}(d)$-depth implementations.
\end{remark}

\begin{figure}
\begin{center}
\begin{quantikz}[column sep=0.6cm, row sep=0.25cm]
\lstick{$|0\rangle_{\mathrm{anc}_*}$} & \gate[4]{\tilde V_*} & \meter{|0\rangle} \\
\lstick{$|x_i\rangle$} & & \rstick[3]{$W_*|x_i\rangle$} \\
\lstick{$\vdots$}\setwiretype{n} & & \\
\lstick{} & &
\end{quantikz}
\end{center}
\caption{Q, K projection diagram.}
\label{fig:qkvproj}
\end{figure}

\subsection{Variational realization via rotation-CRY ansatz}
\label{sec:variational-block-encoding}

Lemma~\ref{lem:embed} guarantees a unitary $V$ realizing $W$ as its upper-left block, but is non-constructive. The implementation uses a parameterized circuit family that makes the construction explicit and trainable.

\begin{construction}[Hardware-efficient rotation-CRY ansatz]
\label{con:rot-ansatz}
On $b+1$ qubits (data register $B$ plus one block-encoding ancilla), define $U_L(\boldsymbol\Theta)$ by $L$ repetitions of the layer
\begin{enumerate}[leftmargin=*,nosep]
\item Per-qubit $\mathrm{Rot}(\varphi_{l,q}, \theta_{l,q}, \omega_{l,q}) = R_z(\omega_{l,q})\,R_y(\theta_{l,q})\,R_z(\varphi_{l,q})$ for $q \in \{0, \ldots, b\}$;
\item Anchor entanglers $\mathrm{CRY}(\alpha_{l,q})$ with ancilla controlling each data qubit $q \in \{0, \ldots, b-1\}$;
\item Brick entanglers $\mathrm{CRY}(\beta_{l,q})$ data $q \to q+1$ for $q \in \{0, \ldots, b-2\}$,
\end{enumerate}
followed by a final layer of per-qubit Rot. Define the variational block-encoded operator $W_L(\boldsymbol\Theta) := (\langle 0|_{\mathrm{anc}} \otimes I_B)\, U_L(\boldsymbol\Theta)\, (|0\rangle_{\mathrm{anc}} \otimes I_B)$.
\end{construction}

\begin{lemma}[Variational realization of the block-encoded operator]
\label{lem:rot-ansatz-realization}
The family $\{W_L(\boldsymbol\Theta)\}_{L, \boldsymbol\Theta}$ has the following properties.
\begin{enumerate}[leftmargin=*,nosep]
\item \emph{Validity.} For every $L$ and every $\boldsymbol\Theta$, $\|W_L(\boldsymbol\Theta)\|_{\mathrm{op}} \leq 1$, so $W_L$ is a valid block-encoded operator in the sense of Lemma~\ref{lem:embed}.
\item \emph{Identity at the origin.} $U_L(\mathbf{0}) = I_{2d}$ and $W_L(\mathbf{0}) = I_d$.
\item \emph{Density.} For every contraction $W \in \mathbb{C}^{d \times d}$ with $\|W\|_{\mathrm{op}} \leq 1$ and every $\varepsilon > 0$, there exist $L_\varepsilon \in \mathbb{N}$ and parameters $\boldsymbol\Theta_\varepsilon$ such that $\|W_{L_\varepsilon}(\boldsymbol\Theta_\varepsilon) - W\|_{\mathrm{op}} < \varepsilon$.
\end{enumerate}
\end{lemma}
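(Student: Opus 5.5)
The three properties are of different kinds, so I will treat them separately. Validity is a compression argument and holds at every finite depth. The identity statement is a gate-by-gate check. Density reduces to the two-step dilation of Lemma~\ref{lem:embed} combined with the expressivity assumption stated in the introduction.

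\emph{Validity.} Write $P_0 := |0\rangle_{\mathrm{anc}} \otimes I_B$. Then $W_L(\boldsymbol\Theta) = P_0^\dagger U_L(\boldsymbol\Theta) P_0$ is a compression of a unitary, and $P_0$ is an isometry, so
\[
\|W_L(\boldsymbol\Theta)\|_{\mathrm{op}} \le \|P_0^\dagger\|_{\mathrm{op}}\,\|U_L(\boldsymbol\Theta)\|_{\mathrm{op}}\,\|P_0\|_{\mathrm{op}} = 1 .
\]
This holds for every $L$ and every $\boldsymbol\Theta$. In particular $W_L$ satisfies the hypothesis of Lemma~\ref{lem:embed}, with $U_L$ itself serving as the dilating unitary.

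\emph{Identity at the origin.} I check each gate at zero parameters. We have $R_y(0) = R_z(0) = I$, so $\mathrm{Rot}(0,0,0) = I$. Also $\mathrm{CRY}(0) = |0\rangle\langle 0| \otimes I + |1\rangle\langle 1| \otimes R_y(0) = I$. Hence every layer, and the final Rot layer, is the identity, giving $U_L(\mathbf 0) = I_{2^{b+1}}$, which equals $I_{2d}$ when $d = 2^b$ (and $I_d$ on the padded block otherwise). Compressing gives $W_L(\mathbf 0) = P_0^\dagger P_0 = I$. This step is purely a convention check: it depends on the $R_z$ phase convention $e^{-i\omega Z/2}$, so that there is no stray global phase.

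\emph{Density.} I proceed in two steps. First, given a contraction $W$, Lemma~\ref{lem:embed} produces an exact unitary dilation $V \in U(2^{b+1})$ with $P_0^\dagger V P_0 = W$. This extends to complex $W$ via the standard Halmos dilation
\[
V = \begin{pmatrix} W & (I-WW^\dagger)^{1/2} \\ (I-W^\dagger W)^{1/2} & -W^\dagger \end{pmatrix}.
\]
Multiplying by a global phase, we may assume $V \in SU(2^{b+1})$ up to a phase that I absorb. Second, I invoke the expressivity assumption: the set $\bigcup_L \{U_L(\boldsymbol\Theta)\}$ is dense in $SU(2^{b+1})$, modulo global phase. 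So for any $\varepsilon > 0$ there exist $L_\varepsilon$ and $\boldsymbol\Theta_\varepsilon$ with $\|U_{L_\varepsilon}(\boldsymbol\Theta_\varepsilon) - e^{i\phi}V\|_{\mathrm{op}} < \varepsilon$. Compression is $1$-Lipschitz in operator norm by the same isometry bound used for validity, so
\[
\|W_{L_\varepsilon}(\boldsymbol\Theta_\varepsilon) - e^{i\phi}W\|_{\mathrm{op}} < \varepsilon .
\]

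\emph{Main obstacle.} The global phase $e^{i\phi}$ is the difficulty, because density in $SU$ does not by itself remove it at the block level. I will try to handle it in two ways. First, I will argue that the gate set already realizes the relevant phases: a product of $R_z$ rotations on the ancilla produces the relative phase $\mathrm{diag}(e^{-i\alpha/2}, e^{i\alpha/2})$, so an ancilla-local relative phase appears only together with its conjugate. If that does not suffice, I will restrict the claim to $W$ modulo phase, or note that a global phase on $W$ is unobservable in the downstream Hadamard-test scores up to the real-part convention.

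I will also be explicit that, apart from this phase issue, the density property is exactly the architectural assumption stated in the introduction, namely universality of rotations plus CRY on a connected coupling graph. The ancilla-anchor and brick entanglers make that graph connected. I will cite that assumption and not attempt a constructive proof of it.
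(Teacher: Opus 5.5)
Your arguments for validity and for the identity at the origin are the paper's: compression of a unitary by the isometry $|0\rangle_{\mathrm{anc}}\otimes I_B$, and each gate equal to the identity at zero angle. For density you follow the paper's route too: exact dilation, the universality assumption, and $1$-Lipschitz compression.

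The gap is the step you leave open. To land in $SU(2^{b+1})$ you replace $V$ by $e^{i\phi}V$, so you end with $\|W_{L_\varepsilon}(\boldsymbol\Theta_\varepsilon)-e^{i\phi}W\|_{\mathrm{op}}<\varepsilon$. Your fallback of ``restricting the claim to $W$ modulo phase'' would not prove Part (3) as stated.

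The missing observation is that this obstacle is self-inflicted. The off-diagonal and lower-right blocks of a dilation are free, so you can fix the determinant without touching the top-left block. With $N=2^b$ and $e^{i\psi}=\det V$, set
\[
V' \;=\; \bigl(\mathrm{diag}(1,e^{-i\psi/N})_{\mathrm{anc}}\otimes I_B\bigr)\,V .
\]
This rescales only the $|1\rangle_{\mathrm{anc}}$ row block, so the top-left block of $V'$ is still exactly $W$, while $\det V' = e^{-i\psi}\det V = 1$.

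Every Rot gate and every $\mathrm{CRY}$ has determinant $1$, so $U_L(\boldsymbol\Theta)\in SU(2^{b+1})$. The assumption you need is therefore plain density in $SU(2^{b+1})$, with no global phase to quotient out. Approximating $V'$ and compressing gives $\|W_{L_\varepsilon}(\boldsymbol\Theta_\varepsilon)-W\|_{\mathrm{op}}<\varepsilon$ directly.

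This is what your ancilla-$R_z$ idea produces anyway: $e^{i\phi}R_z(2\phi)=\mathrm{diag}(1,e^{2i\phi})$, so you only need to choose $\phi$ to cancel $\det V$ rather than hedge.

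The paper's own proof makes the same jump silently. It asserts density in $SU(2^{b+1})$, then uses the fact that top-left blocks of $U(2d)$ fill the contraction ball. The determinant adjustment above, which shows top-left blocks of $SU$ already fill the ball, is what closes that step there as well. Your attention to the issue is warranted; close it rather than weaken the lemma.
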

\begin{proof}
Part (1) is immediate: the top-left block of any unitary in $U(2d)$ acts as a contraction on $\mathbb{C}^d$ — for unit $X \in \mathbb{C}^d$, lifting to $Y = (X, 0) \in \mathbb{C}^{2d}$ gives $\|Y\| = 1$ and $\|U_L Y\| = 1$ (unitarity), so by Pythagoras $\|W_L X\|^2 = \|P_0 U_L Y\|^2 \leq \|U_L Y\|^2 = 1$.

Part (2) is algebraic: $R_y(0) = R_z(0) = I_2$ (from $\cos 0 = 1$, $\sin 0 = 0$, $e^0 = 1$) and $\mathrm{CRY}(0) = I_4$ identically, so each gate at zero parameter is identity and their product on $b+1$ qubits is $I_{2d}$. Hence $W_L(\mathbf{0}) = (\langle 0|_{\mathrm{anc}} \otimes I_B) I_{2d} (|0\rangle_{\mathrm{anc}} \otimes I_B) = I_d$. The algebraic core of part (2) is Lean-verified in Appendix~\ref{sec:leancode}.

Part (3) is the density of the variational family in the closed unit ball of $d \times d$ contractions. The gate set generating each layer (arbitrary single-qubit rotations plus CRY entanglers on a connected coupling graph) is universal~\cite{barenco1995elementary,bremner2002practical}, so the rotation-CRY family $\{U_L(\boldsymbol\Theta)\}$ is dense in $SU(2^{b+1})$ in the operator-norm topology as $L \to \infty$. By unitary-dilation (Sz.-Nagy--Halmos) surjectivity, the top-left $d \times d$ block of $U(2d)$ traces out exactly the closed unit ball of contractions in $\mathbb{C}^{d \times d}$, hence $\{W_L\}$ is dense in this ball. Density is a topological / Lie-group statement and is not formalized in Lean here.
\end{proof}

The ancilla's per-qubit $\mathrm{Rot}$ gates are essential: without them, the ancilla would remain in $|0\rangle$ throughout, post-selection would succeed with probability 1, and $W_L \equiv I$ for every parameter setting. The anchor CRY gates combined with ancilla rotations are precisely what gives the variational family access to non-trivial contractions.

\section{Pairwise overlaps via Hadamard test}
\label{sec:overlaps}

The projections defined in Section~\ref{sec:qkv} place the query and key states $|q_i\rangle$ and $|k_j\rangle$ on the data register. The next classical operation takes their inner product. An inner product can be realized through interference by preparing the two states on two branches of a single ancilla qubit and recombining the branches. The ancilla's outcome probability is then the affine function of $\Real\langle q_i|k_j\rangle$. This is the Hadamard test~\cite{Buhrman2001} (\Cref{fig:swaptest}).

\begin{lemma}[State-overlap Hadamard test]
\label{lem:overlap}
Let $B$ be a single data register, $h$ a single-qubit ancilla, and let $\mathrm{ctrl\text{-}prep}_{ij}$ denote the unitary that, conditioned on token indices $|i\rangle_{A}|j\rangle_{A'}$, loads $|q_i\rangle_B$ when the ancilla is $|0\rangle_h$ branch and $|k_j\rangle_B$ on the $|1\rangle_h$ branch. The circuit $H_h \cdot \mathrm{ctrl\text{-}prep}_{ij} \cdot H_h$ applied to $|i,j\rangle|0\rangle_B|0\rangle_h$ produces
\begin{equation}
|i,j\rangle \otimes \tfrac12\big(|0\rangle_h(|q_i\rangle + |k_j\rangle)_B + |1\rangle_h(|q_i\rangle - |k_j\rangle)_B\big),
\end{equation}
with marginal probability $|0\rangle_h$
\begin{equation}
P_{ij}(|0\rangle_h) \;=\; \tfrac14\big\|\,|q_i\rangle + |k_j\rangle\,\big\|^2 \;=\; \tfrac12\big(1 + \Real\langle q_i | k_j\rangle\big).
\end{equation}
\end{lemma}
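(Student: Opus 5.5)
The plan is a direct state-vector computation, tracking the ancilla $h$ through the three gates with the token register $|i,j\rangle$ carried as a spectator. The key modelling choice is to fix the action of $\mathrm{ctrl\text{-}prep}_{ij}$ on the relevant subspace. On input $|i,j\rangle|0\rangle_B|b\rangle_h$ it maps to $|i,j\rangle|q_i\rangle_B|0\rangle_h$ for $b=0$, and to $|i,j\rangle|k_j\rangle_B|1\rangle_h$ for $b=1$. This is exactly the controlled form $|0\rangle\langle0|_h\otimes U_{q_i}+|1\rangle\langle1|_h\otimes U_{k_j}$, where $U_{q_i}|0\rangle=|q_i\rangle$ and $U_{k_j}|0\rangle=|k_j\rangle$ are state-preparation unitaries.

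The existence of these preparation unitaries is where I will invoke earlier results. Under Remark~\ref{rem:conventions}(ii), $|q_i\rangle$ and $|k_j\rangle$ are the normalized post-selected outputs of Corollary~\ref{cor:qkv-embed} acting on the encoding of Definition~\ref{def:encoding}. Since they are unit vectors, any completion of each to a unitary serves as the preparation. Conditioning on $|i,j\rangle$ is a uniformly controlled unitary and leaves the token register untouched.

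The computation then proceeds in three steps.
\begin{enumerate}[leftmargin=*,nosep]
\item The first Hadamard gives $|i,j\rangle|0\rangle_B\tfrac{1}{\sqrt2}(|0\rangle+|1\rangle)_h$.
\item The controlled preparation gives $|i,j\rangle\tfrac{1}{\sqrt2}\big(|q_i\rangle|0\rangle_h+|k_j\rangle|1\rangle_h\big)$.
\item The second Hadamard, using $H|0\rangle=(|0\rangle+|1\rangle)/\sqrt2$ and $H|1\rangle=(|0\rangle-|1\rangle)/\sqrt2$, and collecting terms by ancilla value, gives the displayed state with prefactor $\tfrac12$.
\end{enumerate}

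For the marginal, I project onto $|0\rangle_h$. The spectator $|i,j\rangle$ has unit norm, so the probability is $\tfrac14\||q_i\rangle+|k_j\rangle\|^2$. Expanding the squared norm gives $\|q_i\|^2+\|k_j\|^2+2\Real\langle q_i|k_j\rangle$. Unit norms then yield $\tfrac12(1+\Real\langle q_i|k_j\rangle)$.

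The only real obstacle is the interpretive one of making $\mathrm{ctrl\text{-}prep}_{ij}$ a genuine unitary. Post-selection in Corollary~\ref{cor:qkv-embed} is not itself unitary, so I will not feed block encodings into the test directly. Instead, I will define the branch preparations abstractly as unitaries sending $|0\rangle_B$ to the normalized vectors, and note that everything after that is routine algebra.
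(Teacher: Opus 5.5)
Your proposal is correct and follows the paper's proof essentially step for step: a Hadamard on $h$, then the branch-wise controlled preparation, then a second Hadamard with terms collected by ancilla value, and finally the squared norm of the $|0\rangle_h$ branch expanded using $\|q_i\| = \|k_j\| = 1$. Your explicit modelling of $\mathrm{ctrl\text{-}prep}_{ij}$ as $|0\rangle\langle 0|_h \otimes U_{q_i} + |1\rangle\langle 1|_h \otimes U_{k_j}$, with abstract preparation unitaries rather than nested post-selected block encodings, matches how the paper treats it as an idealized oracle in Remark~\ref{rem:ctrl-prep-oracle}.
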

\begin{proof}
Starting from $|0\rangle_h|0\rangle_B$, the first $H_h$ produces
\begin{equation*}
\tfrac{1}{\sqrt 2}(|0\rangle_h + |1\rangle_h)\otimes|0\rangle_B.
\end{equation*}
Applying $\mathrm{ctrl\text{-}prep}_{ij}$ branch-by-branch yields $\tfrac{1}{\sqrt 2}(|0\rangle_h|q_i\rangle_B + |1\rangle_h|k_j\rangle_B)$, and the second $H_h$ gives the displayed branch decomposition. The marginal $|0\rangle_h$ probability is the squared norm of the $|0\rangle_h$-branch data coefficient:
\begin{equation*}
\tfrac14\|\,|q_i\rangle + |k_j\rangle\,\|^2 = \tfrac14(2 + 2\Real\langle q_i|k_j\rangle) = \tfrac14\bigl(\|q_i\|^2+\|k_j\|^2 + 2\Real\langle q_i|k_j\rangle\bigr) = \tfrac12(1+\Real\langle q_i|k_j\rangle),
\end{equation*}
where the last equality uses $\|q_i\|^2=\|k_j\|^2=1$.
\end{proof}

\begin{remark}[Oracle status of $\mathrm{ctrl\text{-}prep}$ and the composed variant]
\label{rem:ctrl-prep-oracle}
Lemma~\ref{lem:overlap} treats $\mathrm{ctrl\text{-}prep}_{ij}$ as an oracle for the \emph{normalized} states $|q_i\rangle, |k_j\rangle$. In the composed circuit these are post-selected block-encoding outputs, and post-selection cannot be nested inside a controlled unitary. The operational circuit instead applies controlled-$\tilde V_Q$ and controlled-$\tilde V_K$ (which are unitaries) and post-selects the block-encoding ancilla once at the end; conditional on that success, the ancilla marginal becomes
\begin{equation}
P_{ij}(|0\rangle_h) \;=\; \frac12\left(1 + \frac{2\, a_q a_k\, \Real\langle q_i|k_j\rangle}{a_q^2 + a_k^2}\right),
\qquad a_q = \|W_Q\sqrt{x_i}\|,\quad a_k = \|W_K\sqrt{x_j}\|,
\end{equation}
with $|q_i\rangle, |k_j\rangle$ normalized as in the lemma. The norms $a_q^2, a_k^2$ are separately observable as single-branch post-selection rates (Remark~\ref{rem:conventions}(ii)), so the normalized score $z_{ij} = \Real\langle q_i|k_j\rangle$ is classically recoverable within the same measure-and-reload step, at additional shot cost. The lemma as stated is the idealized oracle form; the composed form differs from it only by this classically invertible reweighting.
\end{remark}

\begin{figure}
\begin{center}
\begin{quantikz}[column sep=0.6cm, row sep=0.25cm]
\lstick{$|0\rangle_h$} & \gate{H} & \ctrl{1} & \gate{H} & \meter{} \\
\lstick{$|0\rangle_B$} & & \gate[3]{\mathrm{ctrl\text{-}prep}_{ij}} & & \rstick[3]{\shortstack{$\propto |q_i\rangle{+}|k_j\rangle$\\on $h{=}0$ branch}} \\
\lstick{$\vdots$}\setwiretype{n} & & & & \\
\lstick{} & & & &
\end{quantikz}
\end{center}
\caption{Hadamard-/swap-test diagram.}
\label{fig:swaptest}
\end{figure}

A SWAP-test variant on separate registers $B_Q, B_K$ (using $H_h \cdot \cSWAP_{h \to (B_Q, B_K)} \cdot H_h$) yields instead $P(|0\rangle_h) = \tfrac12(1 + |\langle q_i|k_j\rangle|^2)$, the squared overlap. For real-amplitude states such as those arising from amplitude encoding of probability vectors, $\langle q_i|k_j\rangle$ is real, so squared-overlap and signed-overlap differ by a sign that the SWAP test discards; the controlled-prep variant in Lemma~\ref{lem:overlap} is therefore the appropriate route when the score $z_{ij} = \langle q_i|k_j\rangle$ itself (and not $z_{ij}^2$) is the desired output.

The algebraic identity $|(1+z)/\sqrt{2}|^2 = (1 + 2\Real(z) + |z|^2)/2$ is a partial step in this calculation: it gives the squared norm of the scalar combination $(1+z)/\sqrt 2$. The full Hadamard-test probability $\tfrac12(1+\Real(z))$ follows from this identity together with unitarity, which contributes a $-|z|^2$ term from the $|1\rangle_h$-branch component that cancels the $|z|^2$ in the algebraic identity.

\section{Born-rule normalization is a softmax}
\label{sec:softmax}

The previous step gives $z_{ij}$ (the result of the inner product of $q_i$ and $k_j$). The next classical step is normalization of each row by exp-softmax. There is a quantum equivalent to this row-normalization which is achieved by preparing a state, conditioned on the token index $|i\rangle_A$, whose amplitudes over $j$ carry the scores (\Cref{fig:qattn}). This section shows that the resulting family is exp-softmax under a change of variables and that it reaches distributions that exp-softmax cannot at finite scores. We begin this section with the attention map and its Born readout.

\begin{definition}[Quantum attention map]
\label{def:qattn}
Let $\theta_{ij}$ be an angle parameter, supplied either by the measure-and-reload step (Theorem~\ref{thm:equiv}) or as a free learnable parameter (Theorem~\ref{thm:strict-extension}). Define the unitary $U_{\mathrm{attn}}$ on $\HilbertA \otimes \HilbertS$ by
\begin{equation}
U_{\mathrm{attn}} |i\rangle_A |0\rangle_S \;=\;
|i\rangle_A \sum_j \widetilde a_{ij} |j\rangle_S,
\qquad \sum_j |\widetilde a_{ij}|^2 = 1.
\end{equation}
\end{definition}

\begin{proposition}[$\cos^2$-softmax]
\label{prop:cossoftmax}
For controlled-$R_y$ rotation angles $\theta_{ij}$ on $S$ indexed by $A$, with $\widetilde a_{ij} \propto \cos(\theta_{ij}/2)$, the Born-rule attention pattern is
\begin{equation}
\boxed{\quad P(j \mid i) \;=\; \frac{\cos^2(\theta_{ij}/2)}{\sum_\ell \cos^2(\theta_{i\ell}/2)} \quad}
\end{equation}
which is row-stochastic by construction (Born-rule simplex).
\end{proposition}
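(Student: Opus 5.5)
The plan has two halves: first produce an explicit circuit realizing $U_{\mathrm{attn}}$ of Definition~\ref{def:qattn} whose amplitudes are $\widetilde a_{ij} = \cos(\theta_{ij}/2)/\mathcal N_i$, and then read off the Born rule.

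\emph{Construction.} For each fixed $i$, prepare the $S$ register in the uniform superposition $\frac{1}{\sqrt n}\sum_j |j\rangle_S$ using Hadamards. Then apply a rotation $R_y(\theta_{ij})$ to a flag ancilla, controlled jointly on $|i\rangle_A|j\rangle_S$. Since $R_y(\theta)|0\rangle = \cos(\theta/2)|0\rangle + \sin(\theta/2)|1\rangle$, the resulting state is
\begin{equation*}
\frac{1}{\sqrt n}\sum_j |j\rangle_S\bigl(\cos(\theta_{ij}/2)|0\rangle + \sin(\theta_{ij}/2)|1\rangle\bigr).
\end{equation*}
Post-select the flag on $|0\rangle$. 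The unnormalized $S$-state is then $\frac{1}{\sqrt n}\sum_j \cos(\theta_{ij}/2)|j\rangle$. Its success probability is $\frac1n\sum_\ell \cos^2(\theta_{i\ell}/2)$, which is positive unless every angle equals $\pi$.

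\emph{Born readout.} Normalizing gives $\widetilde a_{ij} = \cos(\theta_{ij}/2)/\bigl(\sum_\ell \cos^2(\theta_{i\ell}/2)\bigr)^{1/2}$. This is exactly the proportionality stated in the proposition, with $\sum_j|\widetilde a_{ij}|^2 = 1$, so it defines a valid $U_{\mathrm{attn}}$ row. Measuring $S$ in the computational basis, conditioned on $|i\rangle_A$, gives $P(j\mid i) = |\widetilde a_{ij}|^2$, which is the boxed formula. Row-stochasticity is then immediate: each entry is a nonnegative squared modulus, and the entries sum to one by the normalization. In other words, the Born rule performs the simplex projection and no explicit division circuit is needed.

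\emph{Obstacles.} The only subtle point is the degenerate case in which all $\cos(\theta_{i\ell}/2) = 0$, so the denominator vanishes. I would exclude it by hypothesis, or note that the row shift of Remark~\ref{rem:conventions}(iii) sends the maximal score to $z=0$, hence $\theta=0$ and $\cos = 1$, which guarantees a nonzero denominator.

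A secondary point is that Definition~\ref{def:qattn} asks for a unitary, whereas the construction above uses post-selection. To get a genuine unitary, I would instead take any state-preparation unitary mapping $|0\rangle_S$ to the normalized vector $(\widetilde a_{ij})_j$. Such a unitary exists by completing that vector to an orthonormal basis, and controlling it on $|i\rangle_A$ gives $U_{\mathrm{attn}}$. The controlled-$R_y$ circuit is then one concrete, post-selected realization of it.
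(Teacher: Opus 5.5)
Your proposal is correct and matches the paper's proof. Both normalize $\cos(\theta_{ij}/2)$ into amplitudes $\widetilde a_{ij}$, read off $P(j\mid i)=|\widetilde a_{ij}|^2$, and offer the same two realizations, but with the emphasis reversed: the paper leads with the compiled $i$-controlled state preparation (no post-selection, rotation-tree synthesis with $O(n)$ gates per row), which you use only as a fallback to meet the unitarity of Definition~\ref{def:qattn}, and it treats your post-selected controlled-$R_y$ circuit as the alternative (Remark~\ref{rem:trotter-special-cases}). Your explicit handling of the degenerate all-$\pi$ row is a small addition the paper's proof leaves implicit; there it appears only as a hypothesis in Theorem~\ref{thm:strict-extension}.
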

\begin{proof}
The angles $\theta_{ij}$ are classical at compile time (reloaded measure-and-reload outputs, or free parameters), so the normalized amplitudes $\widetilde a_{ij} = \cos(\theta_{ij}/2)\big/\sqrt{\sum_\ell \cos^2(\theta_{i\ell}/2)}$ are classically computable, and $U_{\mathrm{attn}}$ is compiled as an $i$-controlled state preparation realizing them on the $a$-qubit register $S$ (a standard rotation-tree synthesis with $O(n)$ gates per row); no post-selection is involved. Born-rule readout of $S$ then returns $j$ with probability $|\widetilde a_{ij}|^2 = \cos^2(\theta_{ij}/2)/\sum_\ell \cos^2(\theta_{i\ell}/2)$, the boxed pattern. An alternative realization applies a literal controlled-$R_y(\theta_{ij})$ per branch on a fresh ancilla and post-selects; see Remark~\ref{rem:trotter-special-cases}.
\end{proof}

We represent this by a single unitary shown in \Cref{fig:qattn}.

\begin{figure}
\begin{center}
\begin{quantikz}[column sep=0.6cm]
\lstick{$|i\rangle_A$} & \ctrl{1} & \\
\lstick{$|0\rangle_S^{\otimes a}$} & \gate{U_{\mathrm{attn}}(\boldsymbol\theta_i)} & \meter{} \rstick{$j \sim P(j\mid i)$}
\end{quantikz}
\end{center}
\caption{Quantum attention diagram.}
\label{fig:qattn}
\end{figure}

\subsection{$\cos^2$- and exp-softmax: isomorphism on the interior, strict extension on the closure}

The two softmax forms agree on the interior of the simplex but differ at the boundary. We make both statements precise.

\begin{theorem}[Interior isomorphism]
\label{thm:isomorphism}
On the interior of the row-stochastic simplex (where every entry satisfies $A_{ij} > 0$ strictly), the $\cos^2$-softmax and exp-softmax distribution families are equal as sets, related by the angle-score bijection
\begin{equation}
\theta_{ij} = 2\arccos\!\left(e^{z_{ij}/2}\right), \qquad
z_{ij} = 2\log\!\cos(\theta_{ij}/2),
\end{equation}
defined for $z_{ij} \in (-\infty, 0]$ and $\theta_{ij} \in [0, \pi]$. Under this bijection, the single-stage $\cos^2$-softmax exactly equals the exp-softmax pointwise:
\begin{equation}
\frac{\cos^2(\theta_{ij}/2)}{\sum_l \cos^2(\theta_{il}/2)}
\;=\; \frac{e^{z_{ij}}}{\sum_l e^{z_{il}}}.
\end{equation}
\end{theorem}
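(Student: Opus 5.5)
The plan is to verify first that the two formulas really define mutually inverse maps between $z\in(-\infty,0]$ and $\theta\in[0,\pi)$. The pointwise identity then follows from $\cos^2(\theta/2)=e^{z}$, and set equality on the interior uses the row-shift invariance of both softmax forms.

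\textbf{Step 1 (bijection).} For $z\le 0$ we have $e^{z/2}\in(0,1]$, so $\arccos(e^{z/2})\in[0,\pi/2)$ and $\theta=2\arccos(e^{z/2})\in[0,\pi)$. Conversely, for $\theta\in[0,\pi)$ we have $\cos(\theta/2)\in(0,1]$, so $2\log\cos(\theta/2)\in(-\infty,0]$ is well defined. Composing the maps gives $\cos(\arccos(e^{z/2}))=e^{z/2}$, so $z$ is recovered. In the other direction, $\arccos(\cos(\theta/2))=\theta/2$ holds because $\theta/2\in[0,\pi/2]$ lies in the principal range of $\arccos$. The endpoint $\theta=\pi$ corresponds to $z=-\infty$, a zero weight. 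I will state explicitly that the bijection is between $(-\infty,0]$ and $[0,\pi)$, with $\theta=\pi$ reserved for the boundary treated in the subsequent strict-extension result. This endpoint bookkeeping is the only delicate point; everything else is algebra.

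\textbf{Step 2 (pointwise equality).} Under the bijection, $\cos^2(\theta_{ij}/2)=(e^{z_{ij}/2})^2=e^{z_{ij}}$ for every $j$. The denominators therefore agree termwise, and $\sum_l e^{z_{il}}>0$, so the normalized ratios coincide.

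\textbf{Step 3 (equality as sets on the interior).} To show that every exp-softmax row lies in the $\cos^2$ family, I take arbitrary real scores, apply the softmax-invariant row shift $z_{ij}\mapsto z_{ij}-\max_l z_{il}$ from Remark~\ref{rem:conventions}(iii), and obtain nonpositive scores. Step 2 then supplies angles in $[0,\pi)$. For the converse, an interior $\cos^2$ row has $\cos(\theta_{ij}/2)\neq 0$ for all $j$, since any zero would give an entry $A_{ij}=0$. After choosing the representative $\theta_{ij}\in[0,\pi)$, using that $\cos^2(\theta/2)$ is even and $2\pi$-periodic, the inverse map gives finite scores $z_{ij}$ reproducing the row.

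Finally, both families cover the whole open simplex. Given $A_{ij}>0$ with $\sum_j A_{ij}=1$, I take $z_{ij}=\log A_{ij}-\max_l\log A_{il}\le 0$. Hence the two families are equal as sets.
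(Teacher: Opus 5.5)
Your proposal is correct and follows the same route as the paper: substitute $\cos^2(\theta_{ij}/2)=e^{z_{ij}}$ to get the pointwise identity, and use softmax row-shift invariance to reduce arbitrary scores to $z_{ij}\le 0$. Your endpoint bookkeeping is more careful than the paper's: the paper states the domain as $\theta_{ij}\in[0,\pi]$ but the bijection as $(-\infty,0)\to(0,\pi)$, whereas the map is a bijection $(-\infty,0]\to[0,\pi)$, with $\theta=\pi$ reached only as $z\to-\infty$; your explicit check that both families exhaust the open simplex also supplies the set-equality argument the paper leaves implicit.
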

\begin{proof}
Substituting $\cos^2(\theta_{ij}/2) = e^{z_{ij}}$ into the $\cos^2$-softmax form gives the exp-softmax form directly. The map is a bijection on the open domain $z_{ij} \in (-\infty, 0)$ to $\theta_{ij} \in (0, \pi)$, with softmax shift invariance ensuring $z_{ij} \leq 0$ is achievable without loss of expressivity.
\end{proof}

\begin{theorem}[$\cos^2$ strictly extends exp at the boundary]
\label{thm:strict-extension}
With free angle parameterization $\theta_{ij} \in [0, \pi]$, restricted to rows with at least one $\theta_{il} \neq \pi$ so that every row distribution is defined, the $\cos^2$-softmax distribution family reaches the closure of the simplex, including boundary distributions where some entries $A_{ij}$ are exactly $0$. With finite-valued score parameterization $z_{ij} \in \mathbb{R}$, the exp-softmax distribution family reaches only the interior of the simplex ($A_{ij} > 0$ for every $i, j$). Therefore
\begin{equation}
\{P_{\cos^2}(\theta) : \theta \in [0,\pi]^{n\times n}\}
\;\supsetneq\;
\{P_{\exp}(z) : z \in \mathbb{R}^{n\times n}\},
\end{equation}
the inclusion being strict.
\end{theorem}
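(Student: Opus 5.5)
The proof splits into three parts: the exp family is confined to the interior, the $\cos^2$ family contains the exp family, and the $\cos^2$ family reaches boundary points.

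\textbf{Exp family stays in the interior.} For any finite $z \in \mathbb{R}^{n\times n}$ each numerator satisfies $e^{z_{ij}} > 0$. Each denominator $\sum_l e^{z_{il}}$ is finite and positive. Hence every entry of $P_{\exp}(z)$ is strictly positive, and the exp family lies inside the open simplex.

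\textbf{Inclusion of the exp family in the $\cos^2$ family.} I will use Theorem~\ref{thm:isomorphism} after a row shift. Given $z$, set $z'_{ij} = z_{ij} - \max_l z_{il} \leq 0$. Softmax is invariant under adding a constant to a row, so $P_{\exp}(z) = P_{\exp}(z')$. Now define $\theta_{ij} = 2\arccos(e^{z'_{ij}/2})$. This lies in $[0,\pi)$ because $e^{z'_{ij}/2} \in (0,1]$, and the row's maximizer gets $\theta = 0 \neq \pi$. So every row is admissible, and the theorem gives $P_{\cos^2}(\theta) = P_{\exp}(z')$.

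One point needs care here. Theorem~\ref{thm:isomorphism} is stated on the open domain, but its pointwise identity $\cos^2(\theta/2) = e^{z}$ holds at $z = 0$ as well. I will invoke that identity directly rather than the bijection claim, so the endpoint $z'_{ij} = 0$ causes no trouble.

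\textbf{Strictness and reaching the closure.} For strictness it is enough to exhibit one boundary point. Take $n \geq 2$, set $\theta_{i1} = 0$ and $\theta_{ij} = \pi$ for $j \neq 1$. Then $\cos^2(\pi/2) = 0$, and the row is admissible. The resulting distribution is the vertex $e_1$ in each row, which has zero entries, so by the first step it is not in the exp family.

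For the full closure claim, take any row distribution $p$ in the closed simplex. Define $\theta_{ij} = 2\arccos(\sqrt{p_j})$, which lies in $[0,\pi]$. Some $p_j > 0$, so some $\theta_{ij} \neq \pi$ and the row is admissible. Then
\[
\frac{\cos^2(\theta_{ij}/2)}{\sum_l \cos^2(\theta_{il}/2)} = \frac{p_j}{\sum_l p_l} = p_j .
\]
So the $\cos^2$ family equals the entire closed simplex of row-stochastic matrices, and in particular contains every boundary point.

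No step is a real obstacle; this is elementary. The only things to handle carefully are the $z = 0$ endpoint in the bijection, the row admissibility condition, and the trivial case $n = 1$. When $n = 1$ the simplex is a single point and has no boundary, so strictness needs $n \geq 2$ (implicitly $d$ for columns). I would state that assumption explicitly.
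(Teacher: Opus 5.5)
Your proposal is correct. Its strictness step is the paper's argument: setting $\theta_{ij}=\pi$ gives an exact zero via $\cos^2(\pi/2)=0$, while $e^{z_{ij}}>0$ keeps every finite-score exp-softmax entry strictly positive. The paper's witness puts $\pi$ at a single entry and $0$ elsewhere in the row, so the row denominator is $n-1$. Your vertex witness works just as well. Your $n\ge 2$ caveat matches the hypothesis $1<n$ of the Lean theorem \texttt{strict\_extension}.

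You go beyond the paper in the other two pieces.
\begin{itemize}
\item The paper's proof never argues the inclusion $\supseteq$ itself. It is left implicit in Theorem~\ref{thm:isomorphism}, whose Lean counterpart assumes $z\le 0$. You supply the missing bridge from arbitrary $z\in\mathbb{R}^{n\times n}$ via the row shift. You also correctly note that the identity $\cos^2(\theta/2)=e^{z}$ holds at the endpoint $z=0$, and that the shifted angles lie in $[0,\pi)$, so every row is admissible.
\item The paper supports ``reaches the closure'' only by exhibiting boundary points. Your parameterization $\theta_{ij}=2\arccos(\sqrt{p_j})$ shows that the admissible $\cos^2$ family is exactly the set of all row-stochastic matrices. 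That is a sharper statement, and it makes the displayed set relation fully proved rather than partly inherited.
\end{itemize}

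One small slip: the attention matrix is $n\times n$, so its columns are indexed by key tokens, not by the feature dimension $d$. The parenthetical ``implicitly $d$ for columns'' should refer to $n$.
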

\begin{proof}
Setting $\theta_{ij} = \pi$ for any specific $(i, j)$ gives $\cos^2(\pi/2) = 0$, so $A_{ij} = 0$ at finite parameter value (the Lean witness sets $\theta_{ij} = \pi$ at a single entry and $0$ elsewhere in the row, so the row denominator is $n - 1 > 0$). By contrast, $e^{z_{ij}} > 0$ for all finite $z_{ij}$, so $A_{ij} > 0$ strictly under exp-softmax with finite parameters. Boundary distributions (where some entries are exactly zero) are witness to the strict inclusion.
\end{proof}

A single-stage $\cos^2$-softmax with free angle parameterization has strictly greater expressivity than single-stage exp-softmax with finite scores: it can express ``hard'' attention patterns where some weights are exactly zero, which exp-softmax can only approximate asymptotically. On the interior, the two are isomorphic representations. The $\cos^2$ primitive is therefore a natural extension to the quantum setting that subsumes exp-softmax as a special case (interior distributions) while adding boundary expressivity (sparse / hard-attention distributions).

\subsection{Trotter as one of infinitely many factorizations}

The single-stage equivalence above is the simplest realization. Multi-stage post-selected constructions provide a continuous family of equivalent realizations, with the equal-angle Trotter form (\Cref{fig:trotter}) being the canonical and simplest member.

\begin{construction}[$L$-stage post-selected factorization]
\label{con:trotter}
For score $z_{ij} \leq 0$, integer $L \geq 1$, and any positive weights $\{\alpha_\ell\}_{\ell=1}^L$ with $\sum_\ell \alpha_\ell = \beta$ for a target inverse temperature $\beta > 0$, set the per-stage angle
\begin{equation}
\theta_{ij}^{(\ell)} = 2\arccos\!\left(e^{\alpha_\ell z_{ij}/2}\right), \quad \ell = 1, \dots, L.
\end{equation}
At each stage, prepare a fresh ancilla $h^{(\ell)}$ in $|0\rangle$, apply the controlled rotation $\mathrm{cR}_y(\theta_{ij}^{(\ell)})$ on $h^{(\ell)}$ conditioned on $|i,j\rangle$, measure, post-select $|0\rangle$.
\end{construction}

\begin{figure}
\begin{center}
\begin{quantikz}[column sep=0.55cm, row sep=0.25cm]
\lstick{$|i,j\rangle$} & \ctrl{1} & \ctrl{2} & \cdots & \ctrl{3} & & \\
\lstick{$|0\rangle_{h^{(1)}}$} & \gate{R_y(\theta^{(1)}_{ij})} & & & & \meter{|0\rangle} \\
\lstick{$|0\rangle_{h^{(2)}}$} & & \gate{R_y(\theta^{(2)}_{ij})} & & & \meter{|0\rangle} \\
\lstick{$|0\rangle_{h^{(L)}}$} & & & & \gate{R_y(\theta^{(L)}_{ij})} & \meter{|0\rangle}
\end{quantikz}
\end{center}
\caption{Trotter factorization diagram.}
\label{fig:trotter}
\end{figure}

\begin{theorem}[Exact exp-softmax at inverse temperature $\beta$ via $L$-stage post-selection]
\label{thm:trotter}
Under Construction~\ref{con:trotter} with total weight $\sum_\ell \alpha_\ell = \beta > 0$, conditional on all $L$ post-selections succeeding,
\begin{equation}
P(j \mid i) \;=\; \frac{e^{\beta z_{ij}}}{\sum_l e^{\beta z_{il}}} \;=\; \softmax_j(\beta\, z_{ij}).
\end{equation}
The equivalence is exact at any finite $L \geq 1$ and for any positive weights; the case $\beta = 1$ recovers the unit-temperature exp-softmax of Eq.~\ref{eq:classical}.
\end{theorem}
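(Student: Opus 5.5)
The proof reduces to a per-branch computation of post-selection amplitudes followed by conditioning. I would proceed in three steps.

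\textbf{Step 1: a single stage.} Fix a row $i$ and assume, as in Remark~\ref{rem:conventions}(iii), that the scores have been row-shifted so that $z_{ij}\le 0$ for all $j$. Then $e^{\alpha_\ell z_{ij}/2}\in(0,1]$, so $\theta^{(\ell)}_{ij}=2\arccos(e^{\alpha_\ell z_{ij}/2})\in[0,\pi)$ is well defined. Since $R_y(\theta)|0\rangle=\cos(\theta/2)|0\rangle+\sin(\theta/2)|1\rangle$, the amplitude for outcome $|0\rangle$ on $h^{(\ell)}$ in branch $|i,j\rangle$ is
\[
c^{(\ell)}_{ij}=\cos(\theta^{(\ell)}_{ij}/2)=e^{\alpha_\ell z_{ij}/2}.
\]
This is the same substitution used in Theorem~\ref{thm:isomorphism}, now applied stage by stage.

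\textbf{Step 2: composing the stages.} The input to the cascade is the state $\sum_j \widetilde a^{(0)}_{ij}|i,j\rangle$ over key index $j$. I take the uniform prior $\widetilde a^{(0)}_{ij}=1/\sqrt n$, which is the $S$-register state with all angles zero, or equivalently the uniform superposition on $A'$. The stages act on fresh ancillas and are all diagonal in $|i,j\rangle$, so they commute with one another and with the final readout. Induction on $L$ then shows that after all $L$ ancillas are projected onto $|0\rangle^{\otimes L}$, the unnormalized branch amplitude is
\[
\frac{1}{\sqrt n}\prod_{\ell=1}^L e^{\alpha_\ell z_{ij}/2}=\frac{1}{\sqrt n}\,e^{\beta z_{ij}/2}.
\]
This uses $\sum_\ell\alpha_\ell=\beta$ together with $\exp(a)\exp(b)=\exp(a+b)$. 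Squaring, the joint probability of landing in branch $j$ with every post-selection succeeding is $e^{\beta z_{ij}}/n$.

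\textbf{Step 3: conditioning.} Conditioning on success divides by the total success probability $\sum_l e^{\beta z_{il}}/n$, which is strictly positive. Positivity holds because each term is an exponential, and after the row shift at least one $z_{il}=0$, so the sum is at least $1/n$. The factors of $1/n$ cancel, leaving $\softmax_j(\beta z_{ij})$. The result holds for every $L\ge1$ and every positive weighting, since only the sum of the weights enters. Setting $\beta=1$ recovers the unit-temperature exp-softmax of Eq.~\ref{eq:classical}.

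\textbf{Main obstacle.} The algebra is short; the real care lies in the probabilistic bookkeeping. The sequential mid-circuit post-selections must be shown equivalent to a single projection onto $|0\rangle^{\otimes L}$. For this I would argue that the measurements act on distinct, freshly prepared ancillas, so the sequential conditional probabilities multiply to the joint success probability, and renormalizing once at the end is the same as renormalizing after every stage.

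I would also state the prior on $j$ explicitly. A non-uniform prior $\widetilde a^{(0)}_{ij}$ would multiply $P(j\mid i)$ by $|\widetilde a^{(0)}_{ij}|^2$. Alternatively, one could fold the single-stage $\cos^2$ preparation of Proposition~\ref{prop:cossoftmax} in as one of the weights $\alpha_\ell$, which does not change the total $\beta$.
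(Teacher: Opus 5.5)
Your proposal is correct and takes the same route as the paper. The per-stage success amplitude is $\cos(\theta^{(\ell)}_{ij}/2)=e^{\alpha_\ell z_{ij}/2}$, the stages multiply to $e^{\beta z_{ij}}$ via $\sum_\ell\alpha_\ell=\beta$, and normalizing over $j$ gives the result. Your explicit uniform prior on $j$, the positivity of the normalizer, and the argument that sequential post-selections on fresh ancillas equal a single projection onto $|0\rangle^{\otimes L}$ make precise what the paper's three-line proof leaves implicit.
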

\begin{proof}
Stage $\ell$ post-selection probability per pair: $\cos^2(\theta_{ij}^{(\ell)}/2) = e^{\alpha_\ell z_{ij}}$. Independent post-selections multiply: $\prod_\ell e^{\alpha_\ell z_{ij}} = e^{(\sum_\ell \alpha_\ell)z_{ij}} = e^{\beta z_{ij}}$. Conditioning on $i$ and normalizing across $j$ gives the exp-softmax at inverse temperature $\beta$.
\end{proof}

\begin{corollary}[Measurement count as inverse temperature]
\label{cor:temperature}
Take $L$ identical rounds of the unit-weight stage ($\alpha_\ell = 1$, per-stage angles $\theta_{ij} = 2\arccos(e^{z_{ij}/2})$). Conditional on all $L$ post-selections succeeding,
\begin{equation}
P_L(j \mid i) \;=\; \frac{e^{L z_{ij}}}{\sum_l e^{L z_{il}}} \;=\; \softmax_j(\beta\, z_{ij}), \qquad \beta = L.
\end{equation}
The number of repeated Born measurements is the inverse temperature of the attention distribution, exactly at every finite $L$; non-integer $\beta$ is reached by any positive weights summing to $\beta$ (Theorem~\ref{thm:trotter}). The endpoints anchor exactly: zero rounds leave the uniform superposition ($\beta = 0$, infinite temperature), and $\beta \to \infty$ is the hard-attention limit.
\end{corollary}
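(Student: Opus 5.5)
The corollary is essentially the special case $\alpha_\ell = 1$ of Theorem~\ref{thm:trotter}, plus a check of the two endpoint claims. I will first verify that Construction~\ref{con:trotter} applies. For the integer case, take $L$ stages with $\alpha_\ell = 1$, so $\sum_\ell \alpha_\ell = L = \beta > 0$. Each per-stage angle is then $\theta^{(\ell)}_{ij} = 2\arccos(e^{z_{ij}/2})$. This is well defined because the row shift of Remark~\ref{rem:conventions}(iii) guarantees $z_{ij} \le 0$, so $e^{z_{ij}/2} \in (0,1]$ lies in the domain of $\arccos$.

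Next I make the multiplicativity explicit rather than only cite the theorem. In branch $|i,j\rangle$, each fresh ancilla after $\mathrm{cR}_y(\theta_{ij})$ is $\cos(\theta_{ij}/2)|0\rangle + \sin(\theta_{ij}/2)|1\rangle$. Post-selecting $|0\rangle$ multiplies the branch amplitude by $\cos(\theta_{ij}/2) = e^{z_{ij}/2}$. Since the ancillas are fresh and the rotations are controlled only on the index register, the $L$ rounds act independently. The unnormalized weight on $j$ is therefore $\big(e^{z_{ij}/2}\big)^{2L} = e^{L z_{ij}}$. Conditioning on $i$ and on success of all rounds, then renormalizing over $j$, gives $\softmax_j(L z_{ij})$.

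For non-integer $\beta$, I invoke Theorem~\ref{thm:trotter} directly with any positive weights summing to $\beta$, for example $L = \lceil \beta \rceil$ stages with equal weights $\alpha_\ell = \beta/L$.

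For the endpoints, the $\beta = 0$ case is zero rounds. No amplitude is reweighted, so the $S$-register distribution is uniform, $1/n$, which equals $\softmax_j(0 \cdot z_{ij})$. This formula is consistent with the $L=0$ instance, since $e^{0}=1$. For $\beta \to \infty$, I split on the row maximum. After the row shift, the maximal entries satisfy $z_{ij}=0$ and all others have $z_{ij}<0$. Hence $e^{\beta z_{ij}} \to 0$ for non-maximal $j$, while maximal entries stay at $1$. The distribution converges to the uniform distribution on the argmax set, which is hard attention.

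The main obstacle is being careful about what "conditional on all post-selections succeeding" means when the rounds are done sequentially on the superposition over $j$. The success probability of a later round depends on the renormalized state after earlier rounds. I will handle this by tracking unnormalized amplitudes through all rounds and normalizing once at the end. This gives the same answer as sequential conditioning, because conditional probabilities compose multiplicatively.
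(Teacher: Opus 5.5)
Your proposal is correct and follows the paper's route. The corollary is the $\alpha_\ell = 1$ specialization of Theorem~\ref{thm:trotter}, whose proof multiplies the per-stage post-selection weights $\cos^2(\theta_{ij}/2) = e^{z_{ij}}$ across the $L$ independent rounds and then normalizes over $j$; the Lean lemma \texttt{trotter\_repeated\_measurement} does exactly this specialization. Your additional checks fill in details the paper only asserts: that normalizing once at the end agrees with sequential conditioning, and the two endpoints ($L=0$ gives the uniform row, and $\beta\to\infty$ gives the uniform distribution on the row's argmax).
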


Three consequences follow. First, the $1/\sqrt{d}$ temperature convention of Remark~\ref{rem:conventions}(ii) acquires a native physical realization as post-selected measurement weight, rather than a classical rescaling of the score. Second, the per-row post-selection success probability is $Z_i(\beta)/n$ with $Z_i(\beta) = \sum_l e^{\beta z_{il}}$: the post-selection rate measures the Gibbs partition function at inverse temperature $\beta$, so cooling the attention distribution (lowering its entropy) is paid for in rejected shots, at per-pair success rate $e^{\beta z_{ij}}$. Third, exp-softmax reaches the simplex boundary (hard attention) only as $\beta \to \infty$, that is, only after infinitely many measurement rounds, whereas the free-angle $\cos^2$ family reaches the boundary at the single finite parameter $\theta = \pi$ (Theorem~\ref{thm:strict-extension}); the boundary strict extension is thus the statement that the quantum parameterization reaches the zero-temperature endpoint at finite parameter values. Physically, each round multiplies the branch amplitudes by $e^{z_{ij}/2}$, one step of imaginary-time evolution under the effective Hamiltonian $H = -z$. The correspondence is specific to the post-selected route: in the measure-and-reload route, $\beta$ is simply a classical scalar multiplied into the score before the $\arccos$.

\begin{remark}[Special cases]
\label{rem:trotter-special-cases}
The construction includes $L=1$, $\alpha_1 = 1$ as the single-stage post-selected case: one round of post-selection on a single ancilla, realizing the same distribution as the compiled state preparation of Proposition~\ref{prop:cossoftmax} via $\theta = 2\arccos(e^{z/2})$, at per-row success probability $Z_i/n$ with $Z_i = \sum_l e^{z_{il}} \in [1, n]$ after the row shift of Remark~\ref{rem:conventions}. Run over the query-index superposition in $A$, this post-selection tilts the $A$-marginal from uniform to $\propto Z_i$ while leaving each conditional $P(j \mid i)$ exact. The construction further includes $L \geq 2$, $\alpha_\ell = 1/L$ as the standard Trotter form (equal angles, simplest multi-stage construction), and any other positive partition as a non-uniform variant. The equivalence is mathematically identical across all choices; operational considerations (depth, success probability, robustness) determine which is preferred.
\end{remark}

\subsection{Why multi-stage post-selection might still be useful}

Although mathematically not unique, multi-stage post-selection has operational features that distinguish it:
\begin{enumerate}[leftmargin=*]
\item \textbf{Smaller per-stage rotation angles} ($\theta = 2\arccos(e^{\alpha z/2})$ with small $\alpha$) keep individual rotations near $\pi/2$, benign for noisy hardware and trainability.
\item \textbf{Multiplicative probability composition} matches the $\big(\prod e^{\alpha_\ell z}\big) = e^z$ structure directly, useful for explicitly Trotter-style architectures.
\item \textbf{Errors partially average out} across stages in the presence of noise.
\end{enumerate}
None of these are mathematical advantages; they are implementation-time properties. With ideal classical simulation or noiseless hardware, the single-stage construction of Theorem~\ref{thm:isomorphism} is operationally simpler.

\subsection{Coherent finite-depth no-go (qualified)}

\begin{theorem}[Coherent no-go for affinely-encoded scores]
\label{thm:no-go}
Fix $c > 0$ and suppose the scores enter coherently through an \emph{affine} encoding: the block-encoded operator presents singular values $\sigma_{ij} = f(z_{ij})$ with $f$ affine and non-constant (e.g., $\sigma = (z+c)/(2c)$ on $z \in [-c, 0]$), and the score domain contains an open interval. Let $U_L$ be a QSVT circuit of depth $L$ on this block encoding, so that its output amplitudes are $P(\sigma_{ij})$ for a (definite-parity) polynomial $P$ of degree $\leq L$~\cite{gilyen2019quantum}, followed by Born-rule readout $P(\sigma_{ij})^2/\sum_l P(\sigma_{il})^2$. No finite $L$ realizes the exp-softmax $z \mapsto e^{z_{ij}}/\sum_l e^{z_{il}}$ exactly on all score matrices with entries in the domain.
\end{theorem}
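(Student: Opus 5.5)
We argue by contradiction and reduce the claim to a one-variable identity between a polynomial and an exponential. Suppose some depth-$L$ QSVT circuit realizes exp-softmax exactly. Then there is a polynomial $P$ of degree $\le L$ such that
\begin{equation*}
\frac{P(f(z_{ij}))^2}{\sum_l P(f(z_{il}))^2} \;=\; \frac{e^{z_{ij}}}{\sum_l e^{z_{il}}}
\end{equation*}
for every admissible score matrix. Set $Q(z) := P(f(z))^2$. Since $f$ is affine, $Q$ is a polynomial in $z$ of degree at most $2L$. The first step is to remove the row normalization, so that this becomes a pointwise identity.

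Fix a reference score $z_0$ in the open interval $I$ contained in the domain. Take a row with two entries, $z_{i1} = z$ and $z_{i2} = z_0$, where $z$ ranges over $I$. (If $n > 2$, fill the remaining entries with $z_0$ as well; the argument is unchanged.) Equating the ratios of the two entries on the two sides gives
\begin{equation*}
Q(z)\,e^{z_0} = Q(z_0)\,e^{z}.
\end{equation*}
This follows from cross-multiplication, and it only needs the denominators to be nonzero. The exp-side denominator is automatically positive. The $Q$-side denominator must be nonzero for the Born readout to be a defined distribution at all, which the hypothesis of exact realization already requires.

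We then dispose of the case $Q(z_0) = 0$. The left-hand side of the original identity would then give zero probability to entry $2$. The right-hand side gives it $e^{z_0}/(e^{z}+e^{z_0}) > 0$, which is a contradiction. Hence $Q(z_0) \neq 0$, and on all of $I$ we have
\begin{equation*}
Q(z) = C e^{z}, \qquad C = Q(z_0)e^{-z_0} \neq 0.
\end{equation*}

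The final step, which is the core impossibility, is that a polynomial cannot agree with a nonzero multiple of $e^{z}$ on an open interval. Differentiate the identity $2L+1$ times on $I$. The left side becomes identically zero because $\deg Q \le 2L$. The right side becomes $Ce^{z} \neq 0$. This is a contradiction, so no finite $L$ works.

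An equivalent route is analytic continuation: $Q - Ce^{z}$ is entire and vanishes on an interval, so it vanishes identically on $\mathbb{C}$. That is impossible because a polynomial grows only polynomially along the positive real axis while $e^{z}$ grows faster.

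The main obstacle is the reduction step, not the final polynomial-versus-exponential argument, which is routine. Three things must hold there.
\begin{itemize}
\item The two-entry test rows must be admissible score matrices. This needs the domain to contain pairs in $I$, which the hypothesis that the domain contains an open interval provides.
\item The parity constraint and the squaring must not matter. They don't: we only use that $Q = (P\circ f)^2$ is some polynomial of bounded degree.
\item The non-constancy of $f$ must be used honestly. It is not actually needed for the contradiction: if $f$ were constant, $Q$ would be constant and the argument would still go through. So we keep that hypothesis only to match the physical setting.
\end{itemize}
This is the polynomial-exponential impossibility that the paper's introduction states is machine-checked in Lean.
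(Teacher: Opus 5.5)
Your proof is correct and follows the paper's argument: fix all but one entry of a row at a reference score $z_0$ to get $Q(z) = Ce^{z}$ with $C \neq 0$ on an open interval, then differentiate $2L+1$ times to reach a contradiction. The only differences are cosmetic. You work directly in $z$, with $Q = (P\circ f)^2$, where the paper changes variables to $\sigma$; as you note, this makes the non-constancy of $f$ unnecessary. You also rule out $Q(z_0)=0$ using the test row itself rather than the paper's all-equal row.
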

\begin{proof}
Exact realization on the row with all entries equal forces $P(f(z)) \neq 0$ throughout the domain (a zero would produce attention weight $0 \neq 1/n$). Now hold every entry of a row except one at a fixed reference score $z_0$ and vary the remaining entry: exactness of the readout forces the ratio $P(f(z))^2/P(f(z_0))^2 = e^{z}/e^{z_0}$, i.e., $P(f(z))^2 = C\,e^{z}$ with $C = P(f(z_0))^2\, e^{-z_0} > 0$ a constant, for every $z$ in an open interval. Writing $\sigma = f(z)$ with $f$ affine and invertible, the polynomial $Q(\sigma) := P(\sigma)^2$ of degree $2L$ agrees with $C e^{a\sigma + b}$, $a \neq 0$, on an open interval of $\sigma$. Agreement of smooth functions on an open interval forces equality of all derivatives at any interior point $\sigma_0$; but $Q^{(2L+1)} \equiv 0$ while $\frac{d^{2L+1}}{d\sigma^{2L+1}}\, C e^{a\sigma+b}\big|_{\sigma_0} = C\, a^{2L+1} e^{a\sigma_0+b} \neq 0$, a contradiction.
\end{proof}

The algebraic core (a polynomial cannot agree with a genuine exponential on an open interval) is Lean-verified in Appendix~\ref{sec:leancode}. The formal proof uses an even shorter route than the derivative count above: differentiating the assumed identity once gives $Q' = aQ$ on the interval, so the polynomial $Q' - aQ$ has infinitely many roots and is identically zero, contradicting $\deg(Q') < \deg(Q)$ for the nonvanishing $Q$.

\begin{theorem}[$\epsilon$-approximation via QSVT for quantum-loaded scores]
\label{thm:qsvt}
Suppose the scores are bounded, $|z_{ij}| \leq c$, and are not extracted via the measure-and-reload step but must be processed coherently in-circuit (no mid-circuit measurement). Then there exists a quantum circuit of depth $O\!\big(\sqrt{\max\{c, \log(1/\epsilon)\}\,\log(1/\epsilon)}\big)$ above the $\cos^2$ baseline that approximates $\exp$-softmax to accuracy $\epsilon$ via Chebyshev polynomial dressing of $\cos(\theta/2)$ (QSVT with one extra ancilla). This is the fully-coherent alternative to the measure-and-reload realization in Theorem~\ref{thm:equiv}; the trade-off is exactness (measure-and-reload) versus full coherence ($\varepsilon$-approximate QSVT).
\end{theorem}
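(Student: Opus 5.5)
The strategy is to reduce the problem to polynomial approximation of a scalar exponential, and then use the standard QSVT machinery of~\cite{gilyen2019quantum} to turn that polynomial into a circuit. The scores are bounded, $|z_{ij}|\le c$. After the softmax-invariant row shift of Remark~\ref{rem:conventions}(iii), which is absorbed into the affine loading constant, I may assume $z_{ij}\in[-2c,0]$. The score is loaded coherently as a singular value $\sigma_{ij}$, that is, as $\cos(\theta_{ij}/2)$ in the $\cos^2$ baseline of Proposition~\ref{prop:cossoftmax}, through an affine map $\sigma=(z+2c)/(2c)\in[0,1]$. The target amplitude is $e^{z/2}=e^{-c}\,e^{c\sigma}$. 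Theorem~\ref{thm:no-go} shows this cannot be produced exactly at any finite depth, so the task is to find a definite-parity polynomial $P$ with $\sup_{\sigma\in[0,1]}|P(\sigma)-\kappa e^{c(\sigma-1)}|\le\delta$ and $|P|\le 1$ on $[-1,1]$, for some normalization $\kappa\le 1$.

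The first step is the polynomial construction. I would take the known result that $e^{-\beta(1-x)}$ on $[-1,1]$ admits a $\delta$-approximation of degree $O(\sqrt{\max\{\beta,\log(1/\delta)\}\log(1/\delta)})$ (Sachdeva--Vishnoi; also Lemma-type results in Gily\'en et al.). This is obtained by truncating the Chebyshev expansion of $e^{\beta x}$, whose coefficients are modified Bessel functions $I_k(\beta)$, and then applying the Chebyshev/Jacobi--Anger tail bound, which decays superexponentially once $k\gtrsim\sqrt{\beta\log(1/\delta)}$. With $\beta=c$ this produces exactly the claimed degree. Definite parity is handled in the usual way: take the even or odd part after extending to $[-1,1]$ via $\sigma\mapsto|\sigma|$, or equivalently use the one extra ancilla to form the LCU of the even and odd QSVT sequences. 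This is the "one extra ancilla" in the statement. Scaling by $1/2$ ensures $|P|\le1$, and the QSVT theorem then provides phases realizing $P(\sigma_{ij})$ as the amplitude using $\deg P$ queries to the $\cos^2$ block encoding.

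The second step propagates the error through the Born-rule normalization. Write $p_j=e^{z_j}$, and let $\tilde p_j=P(\sigma_j)^2/\kappa^2$. Then $|\tilde p_j-p_j|\le 2\delta/\kappa+\delta^2/\kappa^2$. After the row shift, the maximal entry has $z=0$, so the denominator satisfies $\sum_l p_l\ge1$. A standard ratio estimate then gives $\|\tilde P(\cdot\mid i)-\softmax(z_i)\|_1\le O(n\delta/\kappa)$. I therefore choose $\delta=\Theta(\epsilon\kappa/n)$. Because $\log(n/\epsilon)$ enters only inside the logarithmic factor, this is absorbed into the stated bound, read with $\epsilon$ being the per-entry accuracy. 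The normalization $\kappa$ is a constant such as $1/2$, so it does not affect the asymptotic depth.

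I expect the main obstacle to be the first step, namely establishing the $\sqrt{\max\{c,\log(1/\epsilon)\}\log(1/\epsilon)}$ degree rather than the naive $O(c+\log(1/\epsilon))$ given by Taylor truncation. I would proceed by citing the Bessel-coefficient tail bound $\sum_{k>K}|I_k(\beta)|\le 2e^{-K^2/(2\beta)}$ in the regime $K\le\beta$, together with an elementary geometric bound when $K>\beta$. A secondary point to check is that the post-selection success probability of the QSVT stage, which is of order $\kappa^2 Z_i/n$, affects only the shot count and not the depth, so it does not enter the stated complexity.
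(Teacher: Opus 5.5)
Your Step 1 is the paper's proof. The paper's sketch likewise rescales to $e^{c(x-1)}$ on $[-1,1]$, cites Sachdeva--Vishnoi for degree $O(\sqrt{\max\{c,\log(1/\epsilon)\}\log(1/\epsilon)})$, and applies that polynomial by QSVT, with one extra ancilla, to the affinely encoded $\cos(\theta/2)$. Your tail bound is missing a normalization: the quantity that decays like $e^{-\Theta(K^2/\beta)}$ for $K\le\beta$ is $e^{-\beta}\sum_{k>K}I_k(\beta)$, the Chebyshev tail of $e^{-\beta(1-x)}$. The sum $\sum_{k>K}I_k(\beta)$ alone does not decay this way, since $I_k(\beta)$ is of order $e^{\beta}/\sqrt{\beta}$ for $k\lesssim\sqrt{\beta}$. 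The real problems are in what you add on top of Step 1.

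\textbf{Step 2: the row shift is not available coherently.} Step 2 uses the row shift of Remark~\ref{rem:conventions}(iii) to get $\max_l z_{il}=0$, and hence $\sum_l p_l\ge 1$. That shift subtracts the data-dependent row maximum. It belongs to the measure-and-reload step, where all $n$ scores of a row are collected before reloading. It cannot be absorbed into the fixed affine loading constant of a circuit that is not allowed to measure.

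Coherently, you only have a constant shift such as $z\mapsto z-c$. This gives $z\in[-2c,0]$ but not a row maximum at $0$. Consider a row whose distinct scores all lie near $-c$. Its target amplitudes are about $e^{-c}$, so a uniform additive error $\delta$ becomes a relative error of up to $\delta e^{c}$. That error varies across the row and does not cancel under normalization, so the row-TV error can be of order $e^{c}\delta$, not $n\delta$. Pushing this below $\epsilon$ forces $\log(1/\delta)\approx c+\log(1/\epsilon)$. The Sachdeva--Vishnoi degree then degrades to $O(c+\log(1/\epsilon))$, which is linear in $c$: exactly the Taylor-type scaling you set out to beat.

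So Step 2 does not give a row-TV guarantee at the stated depth. Nor is $\log(n/\epsilon)$ absorbed into the stated bound uniformly in $n$. Your closing caveat is the right fix: the stated depth holds when $\epsilon$ is the uniform accuracy of the dressed amplitudes relative to $\kappa e^{z/2}$. Step 1 alone already proves that, so Step 2 should simply be dropped. This amplitude-level statement is also all the paper's sketch really establishes. Its final sentence, claiming row-TV accuracy $O(\epsilon)$, skips the same normalization accounting. Even with the row maximum at $0$, Cauchy--Schwarz on $\sum_j e^{z_j/2}$ only gives $O(\sqrt{n}\,\delta+n\delta^2)$, and that order is attained in the worst case.

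\textbf{Parity: your two options are not equivalent.}
\begin{itemize}
\item The even extension $e^{c(|\sigma|-1)}$ has a kink at $\sigma=0$, with derivative jump $2ce^{-c}$. By Bernstein's estimate for approximating $|x|$, an even polynomial needs degree of order $ce^{-c}/\epsilon$ to approximate it. That is polynomial, not logarithmic, in $1/\epsilon$.
\item An odd polynomial vanishes at $\sigma=0$, where the target equals $e^{-c}$.
\end{itemize}
The route that works is your second one: an LCU of the even and odd QSVT sequences on one extra ancilla, at a constant normalization cost. It is available here because the score enters as a real diagonal (hence Hermitian) block, so eigenvalue and singular-value transformation coincide. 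This is the ``one extra ancilla'' of the statement and of the paper's sketch.
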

\begin{proof}[Proof sketch]
On $[-1, 1]$ the rescaled exponential $x \mapsto e^{c(x-1)}$ equals $e^{-y}$ under $y = c(1-x) \in [0, 2c]$, and $e^{-y}$ on $[0, 2c]$ admits a polynomial approximation of degree $O\!\big(\sqrt{\max\{c, \log(1/\epsilon)\}\,\log(1/\epsilon)}\big)$ with uniform error $\epsilon$~\cite[Theorem~4.1]{sachdeva2014faster}; the $\sqrt{c}$ dependence is optimal among polynomials by the Markov-type lower bounds surveyed there, and such polynomials are part of the standard QSVT toolbox~\cite{gilyen2019quantum}. The polynomial must also satisfy the usual QSVT boundedness and parity constraints. QSVT applies this polynomial to the singular value $\cos(\theta/2)$ of the block-encoded score rotation using one extra ancilla, producing amplitudes $\epsilon$-close (uniformly) to $e^{z/2}$ up to a global normalization; Born-rule readout of the dressed register then yields each attention row to accuracy $O(\epsilon)$ in total variation.
\end{proof}

\begin{remark}[The encoding is where the hardness lives]
\label{rem:encoding-dependence}
Theorem~\ref{thm:no-go} cannot be strengthened to arbitrary encodings: under the reparameterized encoding $\sigma = e^{z/2}$, which is precisely the angle-score bijection of Theorem~\ref{thm:isomorphism}, the degree-one polynomial $P(\sigma) = \sigma$ already satisfies $P(\sigma)^2 = e^{z}$, and Born readout returns exp-softmax exactly. The obstruction sits entirely in the non-polynomial reparameterization $z \mapsto e^{z/2}$, and the two realizations in this paper are the two ways of paying for it: classically, as one $\arccos$ per score in the measure-and-reload step (exact; Theorem~\ref{thm:equiv}), or coherently, as QSVT polynomial dressing from an affine encoding ($\varepsilon$-approximate at degree $O(\sqrt{\max\{c, \log(1/\varepsilon)\}\,\log(1/\varepsilon)})$; Theorem~\ref{thm:qsvt}). Theorems~\ref{thm:no-go} and~\ref{thm:qsvt} are in this sense a matched pair: the same map that no finite-degree polynomial reaches exactly is reached to accuracy $\varepsilon$ at the cited degree.
\end{remark}

\section{Weighted-sum value loading}
\label{sec:weighted-sum}

The attention register now holds amplitudes whose Born-rule probabilities are the attention weights $A_{ij}$. The next classical layer step is the weighted sum $Y_i = \sum_j A_{ij} W_V x_j$. This requires a value projection, similar to the query and key projections. This is supplied by controlled state preparation and does not require any post-selection. The difference between the value projection and query and key projections of Section~\ref{sec:qkv} is that here the value projection must act on the probabilities $x_j$ rather than on the amplitudes $\sqrt{x_j}$. That is the reason that $W_V$ is realized as a quantum channel rather than as a block encoding (\Cref{fig:valueload}).

\begin{construction}[Stochastic value channel]
\label{con:value-load}
After the attention map produces $|i\rangle_A \sum_j \widetilde a_{ij} |j\rangle_S |0\rangle_{B'} |0\rangle_{B_V}$, apply two unitaries. First, the $j$-controlled token encoding $U_{\mathrm{enc}}: |j\rangle_S |0\rangle_{B'} \to |j\rangle_S |x_j\rangle_{B'}$ (the encoding oracle of Definition~\ref{def:encoding}, conditioned on $S$). Second, the value channel
\begin{equation}
U_V:\ |m\rangle_{B'}\,|0\rangle_{B_V} \;\mapsto\; |m\rangle_{B'} \otimes |w_m\rangle_{B_V},
\qquad |w_m\rangle_{B_V} = \sum_k \sqrt{(W_V)_{km}}\,|k\rangle,
\end{equation}
which conditions on the \emph{feature} index $m$ and loads the amplitude lift of the $m$-th column of $W_V$; the column lifts $|w_m\rangle$ are valid unit states because $W_V$ is column-stochastic (Assumption~\ref{assn:simplex}). $U_V$ is a Stinespring dilation of the classical channel $W_V$: tracing out $B'$ leaves on $B_V$, for each token $j$, a mixed state whose computational-basis diagonal is exactly $(W_V x_j)_k$ (Theorem~\ref{thm:weighted-sum}), at unit success probability with no post-selection. This is shown in \Cref{fig:valueload}. The rotation angles of the $d$ column state-preparation trees are the learnable parameters $\boldsymbol\Theta_V$; column-stochasticity of the realized $W_V$ holds automatically for every angle setting (the columns are Born-squared amplitudes), and $U_V$ is input-independent, compiled once per trained model. A block encoding of $W_V$ would not serve here: it acts at the amplitude level, producing $W_V\sqrt{x_j}$, whereas value aggregation requires the probability-level $W_V x_j$, and $W_V\sqrt{x} \neq \sqrt{W_V x}$ in general.
\end{construction}

\begin{figure}
\begin{center}
\begin{quantikz}[column sep=0.6cm, row sep=0.25cm]
\lstick{$|j\rangle_S$} & \ctrl{1} & & \trash{} \\
\lstick{$|0\rangle_{B'}^{\otimes b}$} & \gate{U_{\mathrm{enc}}} & \ctrl{1} & \trash{} \\
\lstick{$|0\rangle_{B_V}^{\otimes b}$} & & \gate{U_V} & \rstick{$\rho_i$, $\mathrm{diag}\,\rho_i = Y_i$}
\end{quantikz}
\end{center}
\caption{Stochastic value channel diagram.}
\label{fig:valueload}
\end{figure}

\begin{theorem}[Weighted-sum equivalence]
\label{thm:weighted-sum}
Tracing out $S$ and $B'$, the reduced state on $B_V$ conditional on outcome $i$ in $A$ is
\begin{equation}
\rho_i \;=\; \sum_j |\widetilde a_{ij}|^2 \sum_m x_{jm}\, |w_m\rangle\!\langle w_m|.
\end{equation}
Under Assumption~\ref{assn:simplex}, the diagonal of $\rho_i$ in the computational basis equals the classical attention output $Y_i = \sum_j A_{ij} V_j$, $V_j = W_V x_j$, as a probability vector.
\end{theorem}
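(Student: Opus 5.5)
The proof tracks the global pure state through the two unitaries of Construction~\ref{con:value-load} and then computes the partial trace. Everything downstream depends only on the $A$-index $i$, so I condition on $|i\rangle_A$ from the start. The state entering the value channel is $\sum_j \widetilde a_{ij}|j\rangle_S|0\rangle_{B'}|0\rangle_{B_V}$. Applying $U_{\mathrm{enc}}$ gives $\sum_j \widetilde a_{ij}|j\rangle_S\sum_m \sqrt{x_{jm}}\,|m\rangle_{B'}|0\rangle_{B_V}$, using Definition~\ref{def:encoding}. Applying $U_V$ linearly on the $B'$ basis then gives
\begin{equation*}
|\Psi_i\rangle = \sum_j\sum_m \widetilde a_{ij}\sqrt{x_{jm}}\;|j\rangle_S|m\rangle_{B'}|w_m\rangle_{B_V}.
\end{equation*}
No post-selection occurs, so $|\Psi_i\rangle$ is normalized. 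This follows from $\sum_j|\widetilde a_{ij}|^2=1$ and $\sum_m x_{jm}=1$.

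Next I trace out $S$ and $B'$. The states $|j\rangle_S|m\rangle_{B'}$ form an orthonormal basis of the traced system, so every cross term $(j,m)\neq(j',m')$ is killed by $\langle j'|j\rangle\langle m'|m\rangle=\delta_{jj'}\delta_{mm'}$. What remains is
\begin{equation*}
\rho_i=\sum_{j,m}|\widetilde a_{ij}|^2 x_{jm}\,|w_m\rangle\langle w_m|,
\end{equation*}
which is the displayed formula. The point to stress is that $B'$ carries the same feature index $m$ that controls $U_V$. This is what decoheres the $m$-superposition and makes the output a mixture at the probability level rather than the amplitude level. It is also why $W_V$ acts on $x_j$ and not on $\sqrt{x_j}$.

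For the diagonal, I compute $\langle k|w_m\rangle\langle w_m|k\rangle=|\sqrt{(W_V)_{km}}|^2=(W_V)_{km}$. This uses nonnegativity of the entries of the column-stochastic $W_V$. It gives $(\rho_i)_{kk}=\sum_j|\widetilde a_{ij}|^2\sum_m (W_V)_{km}x_{jm}=\sum_j|\widetilde a_{ij}|^2 (W_V x_j)_k$. By Proposition~\ref{prop:cossoftmax}, together with Theorem~\ref{thm:isomorphism} (or Theorem~\ref{thm:trotter} for the post-selected route), we have $|\widetilde a_{ij}|^2=A_{ij}$. Hence $\mathrm{diag}\,\rho_i=\sum_j A_{ij}V_j=Y_i$.

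Finally, I check that this is a probability vector, which holds because it is a convex combination of the simplex points $V_j$ (Assumption~\ref{assn:simplex}). The only delicate step is bookkeeping: conditioning on $A$ must be compatible with the tracing. Since every operation is block-diagonal in $|i\rangle_A$, measuring $A$ and tracing out $S,B'$ commute, so the conditional state is exactly $\rho_i$ above.
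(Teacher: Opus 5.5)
Your proposal is correct and follows the paper's proof essentially step for step: you track the pure state through $U_{\mathrm{enc}}$ and $U_V$, remove the $(j,m)\neq(j',m')$ cross terms by orthonormality of the traced $S\otimes B'$ basis, and read off the diagonal via $|\langle k|w_m\rangle|^2=(W_V)_{km}$, which holds by column-stochasticity. Your additional checks (normalization of $|\Psi_i\rangle$, the convex-combination argument that $Y_i$ lies on the simplex, and compatibility of conditioning on $A$ with the partial trace) are sound and simply make explicit what the paper leaves implicit.
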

\begin{proof}
After the two controlled loads of Construction~\ref{con:value-load}, the joint state on $S \otimes B' \otimes B_V$ (conditional on $i$) is $\sum_j \widetilde a_{ij}\,|j\rangle_S \sum_m \sqrt{x_{jm}}\,|m\rangle_{B'} |w_m\rangle_{B_V}$. The $S$- and $B'$-register basis states are orthonormal, so tracing them out eliminates all $j \neq j'$ and $m \neq m'$ cross terms, leaving the displayed $\rho_i$. Its computational-basis diagonal is $\langle k|\rho_i|k\rangle = \sum_j |\widetilde a_{ij}|^2 \sum_m x_{jm}\, |\langle k|w_m\rangle|^2 = \sum_j A_{ij} \sum_m x_{jm} (W_V)_{km} = \sum_j A_{ij} (W_V x_j)_k$, using $|\langle k|w_m\rangle|^2 = (W_V)_{km}$, which holds because each column of $W_V$ lies on $\Delta^{d-1}$ (Assumption~\ref{assn:simplex}) so that $|w_m\rangle = \sum_k \sqrt{(W_V)_{km}}\,|k\rangle$ is a valid unit state. The channel involves no post-selection: the diagonal is exact at unit success probability.
\end{proof}

The channel-diagonal identity at the heart of the proof is Lean-verified in Appendix~\ref{sec:leancode}.

\section{The gated single-ancilla LCU residual}
\label{sec:residual}

The last classical operation is the weighted summation $\tfrac12(Y_i+x_i)$. A linear combination of unitaries (LCU) realizes this sum with one ancilla. Prepare the ancilla at an angle $\eta$, apply the sublayer conditioned on the ancilla, and combine the branches. The factor of $\tfrac12$ is achieved by setting $\eta=\pi/2$. Two readout options appear below. The first is the coherent readout (interfere and post-select the ancilla) which combines amplitudes and gives the general gated family from Theorem~\ref{thm:gated-residual}. The second is the incoherent readout (trace the ancilla out) which combines probabilities and reproduces Eq.~\ref{eq:classical} exactly (Remark~\ref{rem:incoherent-residual}). This circuit is shown in \Cref{fig:lcuresidual}.

\begin{theorem}[Gated single-ancilla LCU residual]
\label{thm:gated-residual}
Let $U_{\mathrm{full}}$ denote the unitary cascade of Sections~\ref{sec:encoding}--\ref{sec:weighted-sum} up to (but excluding) the partial trace of Theorem~\ref{thm:weighted-sum}, followed by a terminal $\mathrm{SWAP}(B, B_V)$ that relocates the value register into the data register $B$ holding the input lift, acting on encoded $|X\rangle$ (with $S$, $B'$, and $B_V$ initialized to $|0\rangle$). The terminal SWAP is what makes the residual well-posed: without it, the identity branch of the LCU below carries its data in $B$ (with $B_V = |0\rangle$) while the sublayer branch carries its data in $B_V$, and the two branches would combine states living in different registers; with it, both branches present their output in $B$, and every residual readout below refers to $B$. Let $\eta$ be a learnable angle. The circuit
\begin{align}
\text{(i)} \quad &\text{prepare } |\eta\rangle_C = \cos(\eta/2)|0\rangle_C + \sin(\eta/2)|1\rangle_C, \\
\text{(ii)} \quad &\text{controlled-}U_{\mathrm{full}}\ \text{conditioned on}\ |1\rangle_C, \\
\text{(iii)} \quad &\text{Hadamard on}\ C, \\
\text{(iv)} \quad &\text{post-select}\ |0\rangle_C,
\end{align}
produces
\begin{equation}
|y^{\mathrm{out}}\rangle \;\propto\; \cos(\eta/2)\,|X\rangle \;+\; \sin(\eta/2)\,U_{\mathrm{full}}|X\rangle,
\end{equation}
the parameterized residual sum with gate angle $\eta$. The success probability is
\begin{equation}
p_{\mathrm{succ}} = \tfrac12 \big(1 + \sin\eta\,\Real\langle X|U_{\mathrm{full}}|X\rangle\big).
\end{equation}
\end{theorem}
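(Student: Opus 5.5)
The plan is a direct state-vector calculation through steps (i)--(iv), with every register other than $C$ grouped into a single system holding $|X\rangle$. Here $|X\rangle$ means the encoded input of Definition~\ref{def:encoding} tensored with $|0\rangle$ on $S$, $B'$, $B_V$ and on any block-encoding or Hadamard-test ancillas. The key prerequisite is that $U_{\mathrm{full}}$ is a genuine unitary on this joint space. This holds because it is built only from the unitary pieces of Sections~\ref{sec:encoding}--\ref{sec:weighted-sum}: the compiled state preparation $U_{\mathrm{attn}}$ of Proposition~\ref{prop:cossoftmax}, $U_{\mathrm{enc}}$, and $U_V$ from Construction~\ref{con:value-load}, followed by the terminal SWAP. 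None of these involves a partial trace or post-selection. I would state this explicitly first. Without it, controlled-$U_{\mathrm{full}}$ is not a well-defined gate, and the cross term below need not have the claimed form.

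With that in place, step (i) gives $(\cos(\eta/2)|0\rangle_C+\sin(\eta/2)|1\rangle_C)\otimes|X\rangle$. Step (ii) gives $\cos(\eta/2)|0\rangle_C|X\rangle+\sin(\eta/2)|1\rangle_C U_{\mathrm{full}}|X\rangle$. After the Hadamard of step (iii), the $|0\rangle_C$ component is
\[
\tfrac{1}{\sqrt2}\bigl(\cos(\eta/2)|X\rangle+\sin(\eta/2)U_{\mathrm{full}}|X\rangle\bigr),
\]
and the $|1\rangle_C$ component is the same expression with a minus sign on the second term. Post-selecting $|0\rangle_C$ in step (iv) and renormalizing gives the claimed $|y^{\mathrm{out}}\rangle$.

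The success probability is the squared norm of the $|0\rangle_C$ branch. Expanding it with $\||X\rangle\|=\|U_{\mathrm{full}}|X\rangle\|=1$ gives
\[
\tfrac12\bigl(\cos^2(\eta/2)+\sin^2(\eta/2)+2\cos(\eta/2)\sin(\eta/2)\,\Real\langle X|U_{\mathrm{full}}|X\rangle\bigr).
\]
The double-angle identity $2\cos(\eta/2)\sin(\eta/2)=\sin\eta$ then yields $\tfrac12(1+\sin\eta\,\Real\langle X|U_{\mathrm{full}}|X\rangle)$. This is the same norm-expansion computation as in Lemma~\ref{lem:overlap}, with the equal weights replaced by the trigonometric ones. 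The cases $\eta=0,\pi/2,\pi$ follow by direct substitution.

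The main obstacle is not the algebra but the well-posedness discussion that the statement itself flags. One must check that the identity branch and the $U_{\mathrm{full}}$ branch live in the same Hilbert space, so that their sum is meaningful. One must also say what "refers to $B$" means for the readout. I would handle this by noting that the terminal SWAP places the value data in $B$ on the sublayer branch, while on the identity branch $B$ already holds the input lift. Both vectors then lie in the same full tensor-product space, and the LCU identity holds as an equation between vectors there. Interpretation of the output in $B$ is then a statement about marginals, which I would defer to Remark~\ref{rem:incoherent-residual} rather than prove here.
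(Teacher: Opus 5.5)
Your proposal is correct and follows the paper's proof essentially step for step: the same branch-by-branch evolution through (i)--(iv), and the success probability computed as $\tfrac12\bigl\|\cos(\eta/2)|X\rangle+\sin(\eta/2)U_{\mathrm{full}}|X\rangle\bigr\|^2$ expanded with unit norms. Two details you make explicit are left implicit in the paper: the check that $U_{\mathrm{full}}$ contains no post-selection, which matches the paper's later expansion $V_{x_i}\to U_{\mathrm{attn}}\to U_{\mathrm{enc}}\to U_V\to\mathrm{SWAP}(B,B_V)$, and the double-angle step $2\cos(\eta/2)\sin(\eta/2)=\sin\eta$.
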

\begin{proof}
After step (i) the joint state is
\begin{equation*}
\big(\cos(\eta/2)|0\rangle_C + \sin(\eta/2)|1\rangle_C\big)\,|X\rangle.
\end{equation*}
After step (ii), with controlled-$U_{\mathrm{full}}$ conditioned on $|1\rangle_C$,
\begin{equation*}
\cos(\eta/2)|0\rangle_C|X\rangle + \sin(\eta/2)|1\rangle_C U_{\mathrm{full}}|X\rangle.
\end{equation*}
Step (iii) Hadamards the ancilla, mapping $\{|0\rangle, |1\rangle\} \to \{|+\rangle, |-\rangle\}$:
\begin{align*}
&\tfrac{1}{\sqrt 2}|0\rangle_C \big(\cos(\eta/2)|X\rangle + \sin(\eta/2) U_{\mathrm{full}}|X\rangle\big) \\
&\quad + \tfrac{1}{\sqrt 2}|1\rangle_C \big(\cos(\eta/2)|X\rangle - \sin(\eta/2) U_{\mathrm{full}}|X\rangle\big).
\end{align*}
Step (iv) projects onto $|0\rangle_C$, giving (after normalization) $\cos(\eta/2)|X\rangle + \sin(\eta/2) U_{\mathrm{full}}|X\rangle$. The success probability is the squared norm of the unnormalized $|0\rangle_C$-branch amplitude, $\tfrac12 \big\|\cos(\eta/2)|X\rangle + \sin(\eta/2) U_{\mathrm{full}}|X\rangle\big\|^2 = \tfrac12\big(1 + \sin\eta\,\Real\langle X|U_{\mathrm{full}}|X\rangle\big)$.
\end{proof}

\begin{figure}
\begin{center}
\begin{quantikz}[column sep=0.6cm, row sep=0.25cm]
\lstick{$|0\rangle_C$} & \gate{R_y(\eta)} & \ctrl{1} & \gate{H} & \meter{|0\rangle} \\
\lstick{$|X\rangle$} & & \gate{U_{\mathrm{full}}} & & \rstick{$\propto \cos(\eta/2)|X\rangle + \sin(\eta/2)\,U_{\mathrm{full}}|X\rangle$}
\end{quantikz}
\end{center}
\caption{Gated single-ancilla linear combination of unitaries residual addition diagram.}
\label{fig:lcuresidual}
\end{figure}

\subsection{Special cases of the gate angle}

Theorem~\ref{thm:gated-residual} gives a one-parameter family of outputs. Reading its endpoints and midpoint shows that the family contains three residual forms in common use, and that a learnable $\eta$ is a quantum analog of a soft gate.

\begin{corollary}[Gate-angle interpretations]
\label{cor:gate-interp}
The gate angle $\eta$ in Theorem~\ref{thm:gated-residual} produces the following residual modes:
\begin{itemize}[leftmargin=*]
\item $\eta = 0$: identity residual; the output equals the input $|X\rangle$. The sublayer is bypassed entirely.
\item $\eta = \pi$: full sublayer; the output equals $U_{\mathrm{full}}|X\rangle$. No residual.
\item $\eta = \pi/2$: equal-weight LCU residual; the output is the symmetric sum $\propto |X\rangle + U_{\mathrm{full}}|X\rangle$. This is the standard transformer additive identity.
\item $\eta$ learnable: a quantum-native gated residual, generalizing both the additive identity ($\eta = \pi/2$ fixed) and the convex-combination soft gate (any $\eta \in (0, \pi)$).
\end{itemize}
The classical-style soft gate $g \cdot U|X\rangle + (1-g)|X\rangle$ with $g \in [0,1]$ realizes the same family of output states (after $L^2$ renormalization) as the LCU output of Theorem~\ref{thm:gated-residual}, related by the bijective parameter correspondence
\begin{equation}
\frac{g}{1-g} \;=\; \tan(\eta/2),
\qquad\text{equivalently}\qquad
g \;=\; \frac{\sin(\eta/2)}{\sin(\eta/2) + \cos(\eta/2)}.
\end{equation}
The endpoints match the LCU special cases: $\eta = 0 \leftrightarrow g = 0$ (bypass), $\eta = \pi/2 \leftrightarrow g = 1/2$ (symmetric LCU), $\eta = \pi \leftrightarrow g = 1$ (full sublayer).
\end{corollary}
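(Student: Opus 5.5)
The proof is a direct substitution into the output formula of Theorem~\ref{thm:gated-residual}, $|y^{\mathrm{out}}\rangle \propto \cos(\eta/2)|X\rangle + \sin(\eta/2)\,U_{\mathrm{full}}|X\rangle$, followed by an elementary reparameterization argument for the soft-gate correspondence.

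\textbf{Special values of $\eta$.} I evaluate the two coefficients at each listed angle.
\begin{itemize}
\item At $\eta = 0$ we have $\cos 0 = 1$ and $\sin 0 = 0$, so the unnormalized state is $|X\rangle$. The success probability is $\tfrac12(1 + \sin 0\,\Real\langle X|U_{\mathrm{full}}|X\rangle) = \tfrac12 > 0$, so the post-selected output is defined and equals $|X\rangle$.
\item At $\eta = \pi$ the coefficients are $\cos(\pi/2) = 0$ and $\sin(\pi/2) = 1$. The output is $U_{\mathrm{full}}|X\rangle$, again with $p_{\mathrm{succ}} = \tfrac12$.
\item At $\eta = \pi/2$ both coefficients equal $1/\sqrt2$. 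The common factor drops out under normalization, giving $\propto |X\rangle + U_{\mathrm{full}}|X\rangle$. Here $p_{\mathrm{succ}} = \tfrac12(1 + \Real\langle X|U_{\mathrm{full}}|X\rangle)$, which is positive unless $U_{\mathrm{full}}|X\rangle = -|X\rangle$. I will state this as the standing nondegeneracy condition, since the normalized output is otherwise undefined.
\item For $\eta \in (0,\pi)$ both coefficients are strictly positive. Such $\eta$ therefore gives a convex-type weighting, which is the ``learnable gate'' bullet.
\end{itemize}

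\textbf{Soft-gate correspondence.} Two unnormalized vectors $a|X\rangle + b\,U|X\rangle$ and $a'|X\rangle + b'\,U|X\rangle$, with nonnegative weights not both zero, define the same normalized state whenever $(a,b)$ and $(a',b')$ are positively proportional. Matching $(1-g, g)$ with $(\cos(\eta/2), \sin(\eta/2))$ up to a positive scalar gives $g/(1-g) = \tan(\eta/2)$ for $\eta \in [0,\pi)$. Solving for $g$, equivalently normalizing the pair $(\cos(\eta/2), \sin(\eta/2))$ to sum to one, yields $g = \sin(\eta/2)/(\sin(\eta/2) + \cos(\eta/2))$.

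The endpoint $\eta = \pi$ is read from this closed form, which gives $g = 1$; the tangent form is singular there. The denominator $\sin(\eta/2) + \cos(\eta/2)$ is strictly positive on $[0,\pi]$, so the formula is well defined throughout.

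For bijectivity, I differentiate the map $\eta \mapsto g$. It equals $\tfrac12/(\sin(\eta/2) + \cos(\eta/2))^2 > 0$, so the map is strictly increasing and continuous from $[0,\pi]$ onto $[0,1]$. Equivalently, it is the composition of two bijections: $\eta/2 \mapsto \tan(\eta/2)$ from $[0,\pi/2)$ onto $[0,\infty)$, and $t \mapsto t/(1+t)$ from $[0,\infty)$ onto $[0,1)$. The closed form then extends this to $\eta = \pi \leftrightarrow g = 1$.

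The endpoint matches follow by direct evaluation: $\eta = 0$ gives $g = 0$, $\eta = \pi/2$ gives $g = 1/2$, and $\eta = \pi$ gives $g = 1$.

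\textbf{Main obstacle.} The only genuinely delicate point is the phrase ``same family after $L^2$ renormalization.'' It requires the relevant vectors to be nonzero, namely the nondegeneracy noted above; for $\eta$ and $g$ strictly interior this amounts to excluding $U_{\mathrm{full}}|X\rangle = -|X\rangle$. It also requires care that proportionality is by a positive scalar, so that no sign ambiguity identifies distinct gate values. I would handle both by observing that positive scaling of the weight pair is precisely the equivalence relation induced by normalization, given linear independence or the nondegeneracy condition. The remaining steps are routine trigonometry.
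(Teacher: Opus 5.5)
Your proposal is correct and follows the paper's proof. That proof substitutes the special angles directly into $\cos(\eta/2)|X\rangle + \sin(\eta/2)U_{\mathrm{full}}|X\rangle$ and then matches the coefficient ratio $\sin(\eta/2)/\cos(\eta/2) = g/(1-g)$. Your additions fill in points the paper leaves implicit: the nondegeneracy condition $U_{\mathrm{full}}|X\rangle \neq -|X\rangle$ at $\eta=\pi/2$, reading $\eta=\pi$ from the closed form where $\tan$ is singular, and the monotonicity argument $dg/d\eta = \tfrac12(\sin(\eta/2)+\cos(\eta/2))^{-2}>0$ for bijectivity. Only the ``if'' direction of positive proportionality is needed, and that is the one you use, because the ``iff'' (your ``precisely'') also requires $|X\rangle$ and $U_{\mathrm{full}}|X\rangle$ to be linearly independent.
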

\begin{proof}
Direct substitution of the special values into the form of Theorem~\ref{thm:gated-residual}. For the parameter correspondence, two unit-norm states proportional to $\cos(\eta/2)|X\rangle + \sin(\eta/2)U|X\rangle$ and $(1-g)|X\rangle + g\,U|X\rangle$ coincide iff their coefficient ratios match, i.e., $\sin(\eta/2)/\cos(\eta/2) = g/(1-g)$.
\end{proof}

Corollary~\ref{cor:gate-interp} establishes that the soft-gated residual is realized entirely by the parameterized ancilla preparation as a unitary operation. The gate-fixed-at-$0.5$ LCU corresponds to $\eta = \pi/2$, the no-residual case to $\eta = 0$, and the full-sublayer case to $\eta = \pi$.

\begin{remark}[Incoherent variant recovers the probability-level residual exactly]
\label{rem:incoherent-residual}
The coherent LCU of Theorem~\ref{thm:gated-residual} interferes \emph{amplitudes}: at $\eta = \pi/2$ the Born readout of $\propto \sqrt{x} + \sqrt{y}$ (componentwise, for positive lifts) is in general \emph{not} the probability-level average $\tfrac12(x + y)$ of Eq.~\ref{eq:classical}; for example, $x = (1, 0)$ and $y = (\tfrac12, \tfrac12)$ give readout $(0.854, 0.146) \neq (0.75, 0.25)$. The probability-level residual is recovered exactly by the \emph{incoherent} variant of the same circuit: prepare the gating ancilla in $\cos(\eta/2)|0\rangle + \sin(\eta/2)|1\rangle$, apply controlled-$U_{\mathrm{full}}$, and \emph{trace out} the ancilla (no Hadamard, no post-selection). The joint state is then the mixture $\cos^2(\eta/2)\,|X\rangle\!\langle X| + \sin^2(\eta/2)\,U_{\mathrm{full}}|X\rangle\!\langle X|U_{\mathrm{full}}^{\dagger}$; reducing onto the data register $B$ (tracing out $S$, $B'$, and $B_V$, with the terminal SWAP of Theorem~\ref{thm:gated-residual} ensuring both branches present their data in $B$) gives a state whose computational-basis diagonal is exactly $\cos^2(\eta/2)\,x + \sin^2(\eta/2)\,y$; at $\eta = \pi/2$ this is $\tfrac12(x + y)$, with unit success probability. The equivalence theorem below uses the incoherent variant for its exactness claim; the coherent post-selected LCU is the strictly-larger amplitude-level generalization, coinciding with the incoherent readout for generic inputs only at the endpoints $\eta \in \{0, \pi\}$.
\end{remark}

\section{Equivalence theorem}
\label{sec:equiv}

The per-primitive correspondences of Sections~\ref{sec:encoding}--\ref{sec:residual} compose into the full single-head equivalence below.

\begin{theorem}[Quantum equivalent of single-head attention with residual]
\label{thm:equiv}
Adopt Assumption~\ref{assn:simplex} and the conventions of Remark~\ref{rem:conventions}, and assume the variational register has at least $\lceil \log_2 d\rceil + 1$ qubits. Permit one measure-and-reload step per attention scoring: the Hadamard-test outcome $z_{ij}$ is sampled, the row shift of Remark~\ref{rem:conventions} is applied, and the resulting rotation angle $\theta_{ij} = 2\arccos(e^{z_{ij}/2})$ is reloaded as the controlled-$R_y$ gate parameter of the softmax step.

For any input $X$ satisfying the assumption, there exists a quantum circuit $\mathcal{Q}$ and learnable parameters $\{\boldsymbol\Theta_Q, \boldsymbol\Theta_K, \boldsymbol\Theta_V, \eta\}$ (all rotation-gate angles) such that the Born-rule readout of $\mathcal{Q}(X)$ equals the classical single-head self-attention layer with residual $\mathcal{T}^{\mathrm{res}}(X)$ (Eq.~\ref{eq:classical}, read under the conventions of Remark~\ref{rem:conventions}) exactly in the infinite-shot limit under the measure-and-reload step, via the following primitive correspondences:
\begin{itemize}[leftmargin=*]
\item Encoding: Definition~\ref{def:encoding}.
\item Q, K projections: block-encoded $W_Q, W_K$ (Lemma~\ref{lem:embed}); the dilation unitary admits an exact decomposition into rotation gates and CNOTs by standard synthesis~\cite{shende2006synthesis,barenco1995elementary}, so the exactness of this correspondence does not rest on the density assumption. The rotation-CRY ansatz (Construction~\ref{con:rot-ansatz}, Lemma~\ref{lem:rot-ansatz-realization}) is the trainable finite-depth family for the same objects, conditional on Lemma~\ref{lem:rot-ansatz-realization}(3).
\item Softmax: $\cos^2$-Born readout (Proposition~\ref{prop:cossoftmax}) with $\theta_{ij}$ from the measure-and-reload step exactly reproduces exp-softmax (Theorem~\ref{thm:isomorphism}); Theorem~\ref{thm:trotter} gives an equivalent multi-stage realization.
\item Value aggregation: the column-stochastic $W_V$ realized exactly as a quantum channel via controlled column loading (Construction~\ref{con:value-load}), with the column state-preparation angles as the learnable parameters $\boldsymbol\Theta_V$; the diagonal of the partial trace equals the classical probability vector $Y_i = \sum_j A_{ij} W_V x_j$ exactly and deterministically (Theorem~\ref{thm:weighted-sum}).
\item Residual: the incoherent gated-ancilla mixture at $\eta = \pi/2$ reproduces the probability-level residual of Eq.~\ref{eq:classical} exactly (Remark~\ref{rem:incoherent-residual}); the coherent single-ancilla LCU (Theorem~\ref{thm:gated-residual}) is the amplitude-level generalization within the same circuit topology (Corollary~\ref{cor:gate-interp}).
\end{itemize}
\end{theorem}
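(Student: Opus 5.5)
The proof is a composition argument. We build $\mathcal{Q}$ stage by stage, track the state (or reduced density matrix) after each stage, and check that each stage's output is the input the next stage's earlier result expects. The stages are encoding, score, reload, softmax, value channel, residual.

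First, prepare $|X\rangle_{AB}$ by Definition~\ref{def:encoding}. Take $W_Q, W_K$ rescaled to operator norm at most $1$ (Lemma~\ref{lem:embed}), and synthesize the dilations $\tilde V_Q,\tilde V_K$ exactly into rotation gates and CNOTs. This supplies the rotation angles $\boldsymbol\Theta_Q,\boldsymbol\Theta_K$ and, by Corollary~\ref{cor:qkv-embed}, the post-selected states $|q_i\rangle \propto W_Q\sqrt{x_i}$ and $|k_j\rangle\propto W_K\sqrt{x_j}$. Next, run the Hadamard test of Lemma~\ref{lem:overlap} in the composed form of Remark~\ref{rem:ctrl-prep-oracle}, together with the single-branch rates $a_q^2, a_k^2$. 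The outcome frequencies estimate $P_{ij}(|0\rangle_h)$, and in the infinite-shot limit they converge to the true value. The classical inversion of the reweighting in Remark~\ref{rem:ctrl-prep-oracle} is continuous where $a_q a_k>0$, so it returns $z_{ij}=\Real\langle q_i|k_j\rangle$ exactly in that limit. The $1/\sqrt d$ factor is absorbed classically. Apply the row shift $z_{ij}\mapsto z_{ij}-\max_l z_{il}\le 0$ and reload $\theta_{ij}=2\arccos(e^{z_{ij}/2})\in[0,\pi)$. Proposition~\ref{prop:cossoftmax} combined with Theorem~\ref{thm:isomorphism} then gives $|\widetilde a_{ij}|^2=\softmax_j(z_{ij})=A_{ij}$, because softmax is invariant under the row shift.

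Second, append the value channel of Construction~\ref{con:value-load}. Choose the angles $\boldsymbol\Theta_V$ of the column state-preparation trees so that $|\langle k|w_m\rangle|^2=(W_V)_{km}$; this is possible for any column-stochastic $W_V$ because rotation trees prepare arbitrary nonnegative-amplitude unit vectors. Theorem~\ref{thm:weighted-sum} then says the reduced state $\rho_i$ on $B_V$ has diagonal $Y_i=\sum_j A_{ij}W_Vx_j$. Append the terminal $\mathrm{SWAP}(B,B_V)$ to form $U_{\mathrm{full}}$. Then wrap $U_{\mathrm{full}}$ in the incoherent gated-ancilla circuit of Remark~\ref{rem:incoherent-residual} with $\eta=\pi/2$. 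After tracing out $C,S,B',B_V$ and conditioning on $i\in A$, the diagonal on $B$ is $\tfrac12 x_i+\tfrac12 Y_i$, which is $Y_i^{\mathrm{out}}$ in Eq.~\ref{eq:classical}. Conditioning on $A=i$ is legitimate in both branches: the identity branch has $A$-marginal uniform, and the sublayer branch carries $|i\rangle_A$ through unchanged. Here we must use the compiled state preparation of Proposition~\ref{prop:cossoftmax}, not the post-selected Trotter route, so that the $A$-marginal is not tilted (Remark~\ref{rem:trotter-special-cases}). Since no additional post-selection occurs, the Born readout on $B$ given $i$ is exactly this diagonal. Its shot frequencies converge to it, which finishes the proof.

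The main obstacle is making the measure-and-reload interface rigorous. Specifically, $U_{\mathrm{full}}$ must be a fixed unitary inside a controlled operation, yet its softmax angles depend on measured scores, and the block-encoding post-selections cannot be nested inside it. My plan is to split the protocol into two phases. Phase one is a preliminary score-estimation pass on fresh copies of $|X\rangle$, whose only output is the classical angle table. Phase two compiles $U_{\mathrm{full}}$ with those angles frozen, using only unitary pieces: the softmax preparation, $U_{\mathrm{enc}}$, $U_V$, and the SWAP. In phase two the $Q,K$ block encodings never appear, because they enter only through the classical angles. Exactness is then a limit statement: it follows from continuity of the readout in the angles, plus continuity of the angle map in $z$ away from points where $a_q$ or $a_k$ vanishes, which the conventions tacitly exclude.
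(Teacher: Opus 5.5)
Your proposal is correct and follows the same route as the paper. The paper's proof is simply the composition of the per-primitive results: encoding, block-encoded $Q,K$ feeding the Hadamard test, measure-and-reload into the compiled $\cos^2$ state preparation, the value channel of Theorem~\ref{thm:weighted-sum}, and the incoherent $\eta=\pi/2$ mixture of Remark~\ref{rem:incoherent-residual}. Your two-phase freezing of the angles, your check that the $A$-marginal stays uniform in both residual branches (hence the compiled rather than post-selected softmax), and your continuity argument for the infinite-shot limit spell out points the paper's proof leaves implicit in Remarks~\ref{rem:conventions}, \ref{rem:ctrl-prep-oracle}, and \ref{rem:trotter-special-cases}.
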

\begin{proof}
Composition of the per-primitive correspondences listed. Each correspondence is established in its respective theorem; under the probability-space assumption and the qubit-count condition the composition realizes the classical attention layer's softmax pattern and value aggregation exactly (in the infinite-shot limit under the measure-and-reload step) and its additive residual exactly via the incoherent gating variant (Remark~\ref{rem:incoherent-residual}). The measure-and-reload step is the bridge from the Hadamard-test outcome $z_{ij}$ to the controlled-$R_y$ rotation angle $\theta_{ij}$ via the angle bijection of Theorem~\ref{thm:isomorphism}, after the row shift of Remark~\ref{rem:conventions}.
\end{proof}

\paragraph{Measure-and-reload as a standard quantum-classical interface.}
The measure-and-reload step is a generic quantum-computing pattern: an intermediate ancilla is measured, a classical function is applied to the outcome, and the result is reloaded into the circuit as a downstream gate parameter. The same mechanism appears in adaptive quantum measurements~\cite{higgins2007entanglement,wiseman2010quantum}, quantum error correction syndrome extraction~\cite{fowler2012surface,terhal2015quantum}, magic-state distillation~\cite{bravyi2005universal}, and more generally in any algorithm where a piece of classical information extracted from one stage drives the parameter of a later stage. Its overhead of finite-shot statistical noise on the measurement, mid-circuit classical-control latency, and the requirement of mid-circuit measurement capability on the hardware is separately optimizable in principle: amplitude estimation~\cite{brassard2002quantum} gives $\sqrt{N}$ shot scaling, low-latency classical-control electronics reduce dispatch time, and mid-circuit measurement is a routine primitive on contemporary superconducting platforms~\cite{corcoles2021exploiting, govia2023randomized}. The equivalence of Theorem~\ref{thm:equiv} is therefore exact (in the infinite-shot limit), with the practical accuracy bounded by the shot budget allocated to each measure-and-reload step.

An expanded step-by-step audit of the construction's classical-side operations (initial state preparation, gate parameter loading at compile time, per-scoring measure-and-reload, and terminal Born-rule readout) appears in Section~\ref{sec:quantumness}.

At the macro level, the full circuit is the LCU residual sandwich applied to the unitary cascade $U_{\mathrm{full}}$ that realizes encoding, the $\cos^2$-softmax (controlled-$R_y$ on the attention register $S$ with angles supplied by the measure-and-reload step), and the value channel (controlled column loading onto $B_V$) (see \Cref{fig:completecircuit}).
\begin{figure}[]
\begin{center}
\begin{quantikz}[column sep=0.6cm]
\lstick{$|X\rangle$} & & \gate{U_{\mathrm{full}}} & & \rstick{$|Y^{\mathrm{out}}\rangle$} \\
\lstick{$|0\rangle_C$} & \gate{R_y(\eta)} & \ctrl{-1} & \gate{H} & \meter{|0\rangle}
\end{quantikz}
\end{center}
\caption{Complete single-head quantum attention and residual addition diagram.}
\label{fig:completecircuit}
\end{figure}

The unitary $U_{\mathrm{full}}$ expands as $V_{x_i} \to U_{\mathrm{attn}}(\boldsymbol\theta_i) \to U_{\mathrm{enc}} \to U_V \to \mathrm{SWAP}(B, B_V)$ across the diagrams of Sections~\ref{sec:encoding}--\ref{sec:weighted-sum}, with the partial traces over $S$ and $B'$ deferred to readout. The measure-and-reload step sets $\boldsymbol\theta_i = 2\arccos(e^{\boldsymbol z_i/2})$ from the Hadamard-test outcomes; the LCU at $\eta = \pi/2$ realizes the additive residual.

\section{Formal verification in Lean 4}
\label{sec:lean}

The Lean code listings included in Appendix~\ref{sec:leancode} are the actual machine-checked proofs accepted by the Lean 4 kernel against mathlib. The full library compiles successfully with \texttt{lake build} (zero errors).

\paragraph{Scope of verification.}
The verified primitive identities are (i) the angle-score bijection on the simplex interior (Theorem~\ref{thm:isomorphism}), (ii) the boundary strict-extension claim with explicit witness (Theorem~\ref{thm:strict-extension}), (iii) the multi-stage post-selection algebraic identity for any positive weights, including the inverse-temperature generalization and the repeated-measurement special case (Theorem~\ref{thm:trotter}, Corollary~\ref{cor:temperature}), (iv) the gated single-ancilla circuit producing the parameterized residual sum (Theorem~\ref{thm:gated-residual}), (v) the identity-at-origin property of the rotation-CRY block-encoding ansatz (Lemma~\ref{lem:rot-ansatz-realization}, part 2), (vi) the polynomial-exponential impossibility at the algebraic core of the coherent no-go (Theorem~\ref{thm:no-go}), (vii) the channel-diagonal identity of the stochastic value channel (Theorem~\ref{thm:weighted-sum}). The composition into the full single-head equivalence at the algebraic level is also Lean-formalized, as \texttt{full\_single\_head\_equivalence} in \texttt{Master.lean}: cos$^2$-softmax under the angle bijection, composed with the weighted-sum value aggregation and the convex residual, equals the classical exp-softmax attention layer pointwise; the channel-instantiated form \texttt{full\_single\_head\_equivalence\_channel} carries the same equivalence end-to-end from the Born amplitudes $\sqrt{(W_V)_{km}}$ of the value channel. Each is proved from first principles using only mathlib primitives such as \texttt{Real.exp}, \texttt{Real.cos}, \texttt{Real.arccos}, \texttt{Real.cos\_arccos}, \texttt{Finset.sum}, and \texttt{Complex.normSq}.

\paragraph{Imported as standard results.}
The construction relies on standard quantum-computing infrastructure not currently formalized in mathlib: data-reuploading universal-approximation results~\cite{schuld2021effect}, the Solovay--Kitaev theorem, and the QSVT framework underlying the $\epsilon$-approximation theorem. These are cited as established results, qualified by the architectural assumptions of their respective theorems.

\paragraph{Out of scope.}
The architectural-level account of the classical-side interface operations (Section~\ref{sec:quantumness}) is an expository description, not a theorem. Density of the rotation-CRY ansatz in $SU(2^{b+1})$ (Lemma~\ref{lem:rot-ansatz-realization}, part 3) and the QSVT $\varepsilon$-approximation (Theorem~\ref{thm:qsvt}) are cited from the literature. The amplitude-level gate-sequence realization of the residual (Theorem~\ref{thm:gated-residual}) is Lean-checked at the LCU sandwich identity but not composed gate-by-gate with the value-aggregation circuit; the algebraic equivalence above takes the convex residual as given at the probability level, in agreement with Eq.~\ref{eq:classical} and realized exactly by the incoherent gating variant (Remark~\ref{rem:incoherent-residual}), with the coherent amplitude-level analog given by the LCU at $\eta = \pi/2$.

\paragraph{Reproducing the build.}
The Lean project is a standard Lake build pinned to the Lean 4 toolchain at version \texttt{v4.30.0-rc2}, with \texttt{mathlib4} as a dependency (auto-fetched via Lake). Running \texttt{lake build} yields a successful kernel check, which serves as a type-theoretic certificate that every theorem in Appendix~\ref{sec:leancode} has the type asserted by its signature; the Lean kernel rejects any step not reducible to mathlib axioms.

\paragraph{Code availability.}
The Lean 4 code is provided in Appendix~\ref{sec:leancode}.

\section{Quantum vs. classical operations}
\label{sec:quantumness}

Every step of the construction (Theorem~\ref{thm:equiv}) is either a unitary gate operation, a coherent ancilla-mediated transformation, or one of four standard quantum-classical interface operations: initial state preparation, gate parameter loading at compile time, the measure-and-reload step that supplies the softmax angle $\theta_{ij}$ from the Hadamard-test outcome $z_{ij}$, and terminal Born-rule readout.

\paragraph{What the quantum circuit does, step by step.}
\begin{enumerate}[leftmargin=*]
\item \textbf{Encoding.} Unitary state preparation $V_X: |0\rangle^{\otimes (a+b)} \mapsto |X\rangle_{AB}$; the classical input $X$ is loaded once at the start.
\item \textbf{Q, K block encoding.} Each score projection $W_*(\boldsymbol\Theta_*)$, $* \in \{Q, K\}$, is realized as a depth-$L$ rotation-CRY ansatz on $b+1$ qubits (Construction~\ref{con:rot-ansatz}) with the block-encoding ancilla post-selected on $|0\rangle$. The learned angles $\boldsymbol\Theta_*$ are gate parameters, not classical features; the ancilla is recycled across Q and K via reset between projections.
\item \textbf{Pairwise overlap.} The Hadamard test (Lemma~\ref{lem:overlap}) on a second ancilla produces a joint state whose ancilla $|0\rangle$-branch probability equals $(1 + \Real\langle q_i \mid k_j\rangle)/2$. Sampling this ancilla yields $z_{ij}$ (the measure step of measure-and-reload).
\item \textbf{Classical conversion and reload.} A classical control system computes $\theta_{ij} = 2\arccos(e^{z_{ij}/2})$ from each sampled $z_{ij}$ (after the row-max shift of Remark~\ref{rem:conventions}, so all $n$ scores of row $i$ are collected first) and reloads $\theta_{ij}$ as the rotation angle of the controlled-$R_y$ softmax gate (the reload step). The fixed bijection (Theorem~\ref{thm:isomorphism}) carries no learned parameters; the conversion is a deterministic arithmetic step on the classical control system, separate from any optimizer.
\item \textbf{Softmax.} Controlled-$R_y(\theta_{ij})$ on the attention register $S$ realizes $\cos^2$-Born readout (Proposition~\ref{prop:cossoftmax}) which under the reloaded angles reproduces exp-softmax exactly (Theorem~\ref{thm:isomorphism}).
\item \textbf{Value channel and aggregation.} The $j$-controlled token encoding and the column-loading channel $U_V$ (Construction~\ref{con:value-load}), followed by partial trace over $S$ and $B'$, give the classical $Y_i$ on the diagonal of the reduced density matrix (Theorem~\ref{thm:weighted-sum}). The channel is deterministic (no post-selection), and its learned angles $\boldsymbol\Theta_V$ are input-independent compile-time parameters.
\item \textbf{Gated residual.} Gated single-ancilla LCU at $\eta = \pi/2$ (Theorem~\ref{thm:gated-residual}) realizes the additive residual at the amplitude level.
\item \textbf{Readout.} A single round of Born-rule measurement of the data register returns the output probability vector.
\end{enumerate}

\paragraph{Classical-side operations.} The construction uses four classical-side operations, all of which are standard quantum-computing interface primitives:
\begin{description}[leftmargin=*]
\item[\textbf{Initial state preparation.}] The classical input $X$ is loaded once via a unitary state-prep circuit. This is the standard amplitude-encoding interface that many quantum machine learning algorithms use to ingest classical data.
\item[\textbf{Gate parameter loading at compile time.}] The learned rotation angles $\boldsymbol\Theta$ and any fixed circuit parameters are stored on the classical control system and deposited into the rotation-gate parameter fields at circuit-compilation time. This is identical to how common variational quantum algorithms (VQE, QAOA, QCBM) treat their learned parameters; the parameters are circuit constants set once per shot.
\item[\textbf{Measure-and-reload (per scoring pair).}] The Hadamard-test ancilla is sampled, the classical function $\theta = 2\arccos(e^{z/2})$ is computed on the control system, and $\theta$ is reloaded as a downstream controlled-$R_y$ angle. Mid-circuit measurement is a routine primitive on contemporary superconducting hardware~\cite{corcoles2021exploiting, govia2023randomized}. The overhead from finite-shot statistical noise on $z$, classical-control latency, and reload bandwidth is in principle separately optimizable (amplitude estimation~\cite{brassard2002quantum} gives $\sqrt N$ shot scaling; low-latency control electronics shorten the reload cycle).
\item[\textbf{Final measurement.}] A single Born-rule measurement at the end of the circuit returns the output probability vector. No subsequent quantum operation follows.
\end{description}

\paragraph{Coherence caveat.}
The construction of Theorem~\ref{thm:equiv} is exact but not fully coherent. The measure-and-reload step supplies the softmax angle $\theta_{ij}=2\arccos(e^{z_{ij}/2})$ from the classical outcome of the Hadamard test. Removing this step and demanding a fully-coherent finite-depth realization of $z\mapsto\exp(z)/\sum_l\exp(z_{il})$ from quantum-loaded scores runs into Theorem~\ref{thm:no-go}. There are two regimes which escape the no-go depending on the number of tokens $n$:
\begin{itemize}
    \item \textbf{$n=1$:} the softmax is trivially uniform on a single element and no coherent computation is required. This is the degenerate case where ``fully quantum'' holds, but it is not the setting of practical attention.
    \item \textbf{$n\geq2$:} Theorem~\ref{thm:qsvt} gives a fully-coherent $\varepsilon$-approximate route via QSVT polynomial dressing, at cost $O(\sqrt{\max\{c, \log(1/\varepsilon)\}\,\log(1/\varepsilon)})$ in additional circuit depth for scores bounded by $c$.
\end{itemize}
The measure-and-reload construction is what enables representation of the non-trivial $n\geq2$ cases.

\section{Practicality and scaling}\label{sec:practicality}

The equivalence theorem (Theorem~\ref{thm:equiv}) is stated in the infinite-shot limit and in gate-count-agnostic primitives (block-encoded projections, Hadamard tests, controlled state preparation). This section audits the near-term cost of that construction with focus on three critical scaling considerations: state-preparation and oracle model, success probability of the post-selection steps and whether amplitude amplification is needed, and heuristic finite-shot error propagation into the reload angle and softmax weights. We make no claims of a hardware roadmap but instead provide heuristic estimates of resource scaling.

\subsection{State-preparation and oracle model}
The construction takes as an oracle a state-preparation unitary $B_{x_i}$ for each classical input token $x_i \in \Delta^{d-1}$. The cost of realizing $B_{x_i}$ depends on which oracle model is used. For an arbitrary $d$-dimensional probability vector with no exploitable structure, preparing $|x_i\rangle=\sum_j\sqrt{x_{i,j}}\,|j\rangle$ from $|0\rangle^{\otimes b}$, $b=\lceil\log_2 d\rceil$ requires $O(d)$ two-qubit gates in the worst case \cite{shende2006synthesis}. Repeated over $n$ tokens, one preparation of the encoded input costs $O(nd)$ gates; because preparation is repeated on every shot, the per-layer state-preparation cost is $O(nd)$ multiplied by the shot budget of the finite-shot analysis below (see the scaling summary at the end of this section). Under a QRAM oracle~\cite{giovannetti2008quantum}, a coherent superposition query costs $O(\log d)$ time after an $O(nd)$-gate one-time preprocessing. QRAM at meaningful scale is not currently available but is a common assumption when considering practical scaling of quantum algorithms for future systems~\cite{cherrat2024qvit, guo2024quantumtransformer}. If the classical input $X$ is itself the output of an upstream quantum computation (a preceding attention layer, a quantum feature map, or a QSVT projection), state preparation is free by construction and the oracle model reduces to standard unitary composition.

The block-encoded projections $W_Q, W_K$ are realized by depth-$L$ rotation-CRY ans\"atze on $b+1$ qubits (Construction~\ref{con:rot-ansatz}). Each has $O(Lb)=O(L \log d)$ gates. Under the density assumption of Lemma~\ref{lem:rot-ansatz-realization}(3), $L=\operatorname{poly}(d)$ is sufficient for a target Frobenius error on $W_*$. The exact polynomial degree is unknown. The value channel $U_V$ (Construction~\ref{con:value-load}) is an input-independent multiplexed state preparation, $O(d^2)$ two-qubit gates in the worst case~\cite{shende2006synthesis}, compiled once per trained model. The single-ancilla LCU residual and the Hadamard test are $O(1)$-gate primitives on top of the block encodings they wrap.

\subsection{Success probability and amplitude amplification}

Three stages of the construction post-select an ancilla. Those are the two block encodings, the softmax stage (using the post-selected approach), and the residual LCU (using the coherent approach). This subsection bounds the success probability and identifies where amplitude amplification might be needed.

\paragraph{Block-encoding post-selection.} Each of $W_Q, W_K$ uses one ancilla with block-encoding post-selection; the value channel involves no post-selection at all (Theorem~\ref{thm:weighted-sum}). On a single token, the ancilla-$|0\rangle$ success probability is $p_i=\|W_*|x_i\rangle\|^2=\langle x_i| W_*^\dagger W_*|x_i\rangle$. The average per-token success probability is then $\bar{p}=\operatorname{tr}(W_*^{\dagger}W_*\rho_B)$, where $\rho_B = \tfrac1n\sum_i |x_i\rangle\langle x_i|$, bounded by $\|W_*\|_{\mathrm{op}}^2$, the square of the largest singular value of $W_*$. For projections into the unit ball this is $O(1)$ for generic $|x\rangle$. Amplitude amplification boosts a lower bound success probability $p=\Omega(1)$ to near-unity in $O(1/\sqrt p)=O(1)$ upper bound of Grover iterations~\cite{brassard2002quantum}. Fixed-point amplitude amplification~\cite{yoder2014fixedpoint} gives the same asymptotic scaling with less fragility compared to the original construction.

\paragraph{Softmax post-selection.} Two routes are available for the exp-softmax:
\begin{itemize}
    \item \textbf{Single-stage exact route (Theorem~\ref{thm:isomorphism}).} Because the reloaded angles are classical, the normalized attention state is compiled directly as an $i$-controlled state preparation (Proposition~\ref{prop:cossoftmax}), and the $\cos^2$-Born readout returns $j$ with the correct exp-softmax marginal in a single measurement with no rejection; the cost lives entirely in the shot budget of the preceding Hadamard test. The literal controlled-$R_y$ variant with ancilla post-selection (Remark~\ref{rem:trotter-special-cases}) instead succeeds with probability $Z_i/n \geq 1/n$, requiring $O(\sqrt{n})$ amplitude-amplification iterations in the worst case.
    \item \textbf{$L$-stage post-selected route (Theorem~\ref{thm:trotter}).} Per-pair success $e^{z_{ij}}$, which is exponentially small in $|z_{ij}|$ for heavily-suppressed token pairs. Per-pair amplitude amplification gives $O(e^{|z_{ij}|/2})$ Grover iterations, the standard quadratic speedup over the $O(e^{|z_{ij}|})$ shots required at success probability $e^{z_{ij}}$.
\end{itemize}

\paragraph{Residual LCU post-selection.} The single-ancilla LCU (Theorem~\ref{thm:gated-residual}) has ancilla-$|0\rangle$ success probability $p_{\mathrm{succ}} = \tfrac12\big(1 + \sin\eta\,\Real\langle X|U_{\mathrm{full}}|X\rangle\big)$. For a generic unitary this can vanish: at $\eta = \pi/2$ with $U_{\mathrm{full}}|X\rangle = -|X\rangle$, the additive-residual point has success probability zero. The simplex setting rules this case out: every computational-basis amplitude of both $|X\rangle$ and $U_{\mathrm{full}}|X\rangle$ is nonnegative (positive-orthant lifts; $\cos(\theta/2), \sin(\theta/2) \geq 0$ for $\theta \in [0, \pi]$ acting on registers initialized in $|0\rangle$; value amplitudes $\sqrt{V_{jk}} \geq 0$; and post-selection only renormalizes a nonnegative vector), so $\Real\langle X|U_{\mathrm{full}}|X\rangle \geq 0$ and $p_{\mathrm{succ}} \geq \tfrac12$ for every $\eta$. Amplitude amplification is not needed.

\subsection{Heuristic finite-shot error}
The measure-and-reload step samples the ancilla marginal $P_{ij}(|0\rangle)=(1+\Real\langle q_i|k_j\rangle)/2=(1+z_{ij})/2$ from a Bernoulli distribution. Here we define N as shots per row. A shot-averaged estimator over $N$ shots has variance $\sigma^2_{\widehat{z}}=(1-z^2)/N\leq1/N$ on the recovered score $\widehat{z}_{ij}$. Amplitude estimation gives $\sigma_{\widehat{z}}=O(1/N)$ instead of $O(1/\sqrt{N})$ at the cost of an $O(N)$-deep phase-estimation dressing~\cite{brassard2002quantum}.

Because all $n$ scores of a row are collected before any reload and the row shift of Remark~\ref{rem:conventions} subtracts the \emph{empirical} maximum, every shifted score satisfies $\widehat{z}_{ij} - \max_l \widehat{z}_{il} \leq 0$ by construction: the reload map $\theta = 2\arccos(e^{z/2})$ is always defined, and no clipping rule is needed. The shift itself introduces no bias: softmax is exactly invariant under any common shift of a row, including a random one, so the composite readout is exactly the exp-softmax of the noisy scores, $e^{\widehat{z}_{ij}}/\sum_l e^{\widehat{z}_{il}}$.

The Jacobian of the reload map on $z\in(-\infty,0]$, \[\frac{\partial\theta}{\partial z} \;=\; -\frac{e^{z/2}}{\sin(\theta/2)} \;=\;-\frac{e^{z/2}}{\sqrt{1-e^{z}}},\] diverges as $z \to 0^{-}$, but the divergence is a coordinate artifact of angle space, not an amplification of shot noise. The weight-space sensitivity is the composition $\partial A/\partial z = (\partial A/\partial \theta)(\partial\theta/\partial z)$ with $A = \cos^2(\theta/2)$: near $z = 0^-$ one has $\theta \approx 2\sqrt{-z}$ and $\partial A/\partial\theta = -\sin(\theta)/2 \approx -\sqrt{-z}$, so the two factors cancel and $\partial A/\partial z = e^{z} \leq 1$ is uniformly bounded on the whole domain. The same cancellation protects against control noise on the angle itself: $\partial A/\partial\theta \to 0$ exactly where $\partial\theta/\partial z$ diverges. Heavily-suppressed pairs ($z \ll 0$) are further damped by the factor $e^{z}$. The only bias that survives normalization is the Jensen bias of exponentiation, $\mathbb{E}[e^{\widehat{z}}] = e^{z}\,e^{\sigma^2_{\widehat{z}}/2}$: its row-common part cancels between numerator and denominator to first order, leaving a differential contribution of order $(\sigma^2_{\widehat{z}_{ij}} - \bar{\sigma}^2_{i})/2 = O(1/N)$ from the score dependence of the Bernoulli variance, subdominant to the $O(\sqrt{n/N})$ row-TV fluctuation computed next.

Downstream, the softmax weight $A_{ij}=\cos^2(\widehat{\theta}_{ij}/2)=e^{\widehat{z}_{ij}}$ has delta-method variance $\operatorname{Var}(A_{ij})\approx A_{ij}^2 \cdot \operatorname{Var}(\widehat z_{ij})=A_{ij}^2(1-z_{ij}^2)/N \le A_{ij}^2/N$. Summing over a row of length $n$, the total-variation error on the attention row is \[\mathbb{E}\|\widehat A_i - A_i\|_{\mathrm{TV}} \; \lesssim\;\sqrt{n/N}\] so achieving row-Total-Variation (row-TV) error $\varepsilon$ requires $N=\Omega(n/\varepsilon^2)$ shots per row under naive Bernoulli sampling, or $N=\Omega(n/\varepsilon)$ under amplitude estimation. Across all $n$ rows, the total measurement budget per layer is $\Omega(n^2/\varepsilon^2)$ shots (naive) or $\Omega(n^2/\varepsilon)$ with amplitude estimation. The residual gate angle $\eta$ is a compile-time parameter with no per-inference measurement and contributes no shot-noise term.

\paragraph{Order-of-magnitude example.} For inference at target attention-weight precision $\varepsilon=10^{-2}$ on an $n=64$-token sequence, naive sampling needs $\sim6\times10^{5}$ shots per row and $\sim4\times10^{7}$ shots per attention pass. Amplitude estimation reduces this to $\sim6\times10^{3}$ shots per row and $\sim 4 \times 10^{5}$ per pass at the cost of $\sim 10^{2}$-deeper phase-estimation dressing. These budgets are the practicality bottleneck for an $\ell$-layer transformer where the total budget accumulates as $\ell \times$ (per-layer).

\paragraph{Scaling summary.} For inference on classical data, the honest per-layer accounting multiplies gates by shots: each shot re-prepares the input ($O(nd)$ gates without QRAM), and the shot budget is $\Omega(n^2/\varepsilon^2)$ (naive) or $\Omega(n^2/\varepsilon)$ (amplitude estimation), for a per-layer total of $\Omega(n^3 d/\varepsilon^2)$ (respectively $\Omega(n^3 d/\varepsilon)$) gate applications against the classical $O(n^2 d)$ FLOPs. There is no cost regime in which this construction outperforms classical evaluation for single-layer inference on classical data; the contribution of Theorem~\ref{thm:equiv} is structural (an exactness statement), not a speedup claim. The accounting changes category when the input is quantum-native (produced by an upstream quantum layer, feature map, or QSVT projection), since classical evaluation is then unavailable without tomography; in that regime the layer adds only polynomial overhead on top of the pipeline that produced its input, and the relevant comparison is to other quantum layers, not to classical attention.

\section{Discussion}
\label{sec:discussion}

Classical attention's outputs are generally unbounded and, by definition, that regime cannot be recovered by a purely quantum ansatz. For models whose outputs are bounded or renormalized to the probability simplex, however, the result of this paper is stronger than representability: every mechanism of the classical single-head layer has a specific quantum analog that recovers it exactly, and the analogs compose into the full layer. Table~\ref{tab:dictionary} collects the dictionary established in Sections~\ref{sec:encoding}--\ref{sec:residual}, entry by entry, with the exactness status and the certifying result for each. Three entries are not merely recoveries but strict extensions or physical reinterpretations of the classical mechanism: the cosine-squared softmax reaches boundary distributions that exponential softmax cannot reach at finite scores (Theorem~\ref{thm:strict-extension}); the gated ancilla interpolates a one-parameter family of residual modes of which the classical additive identity is the $\eta = \pi/2$ slice (Corollary~\ref{cor:gate-interp}); and the temperature axis, a hyperparameter classically, is realized physically as post-selected measurement weight (Corollary~\ref{cor:temperature}).

\begin{table}
\begin{center}
\small
\begin{tabular}{|p{0.2\textwidth} | >{\raggedright}p{0.25\textwidth} | >{\raggedright}p{0.3\textwidth} | >{\raggedright\arraybackslash}p{0.12\textwidth}|}
\toprule
classical mechanism & quantum analog & status & result \\
\midrule
simplex input & Born-rule amplitude lift & exact & Def.~\ref{def:encoding} \\
projections & block-encoded contractions & exact$^{\dagger}$ & Lem.~\ref{lem:embed}, \ref{lem:rot-ansatz-realization} \\
attention score & Hadamard-test ancilla marginal & exact (infinite shot) & Lem.~\ref{lem:overlap} \\
exp. softmax & Born-rule readout of amplitudes & exact; strictly extended & Thm.~\ref{thm:isomorphism}, \ref{thm:strict-extension} \\
temperature & repeated post-selected measurement & exact (post-selected) & Thm.~\ref{thm:trotter}, Cor.~\ref{cor:temperature} \\
value application & controlled column-loading channel & exact, deterministic & Con.~\ref{con:value-load}, Thm.~\ref{thm:weighted-sum} \\
residual with gate & single-ancilla preparation angle & exact (incoherent); amplitude-level generalization (coherent) & Thm.~\ref{thm:gated-residual}, Rem.~\ref{rem:incoherent-residual} \\
sparse/hard attention & boundary angles $\theta = \pi$ & strict extension & Thm.~\ref{thm:strict-extension} \\
\bottomrule
\end{tabular}
\end{center}
\caption{The quantum-classical dictionary for single-head attention on the probability simplex. $^{\dagger}$Existence of the block encoding is exact (Lemma~\ref{lem:embed}); its variational realization is conditional on the gate-set density assumption of Lemma~\ref{lem:rot-ansatz-realization}(3). Scores are read in the normalized amplitude-lift (Hellinger) convention of Remark~\ref{rem:conventions}.}
\label{tab:dictionary}
\end{table}

\paragraph{Softmax as a slice of the simplex.}
Classical softmax projects a vector of unconstrained real scores onto the probability simplex via the exponential map, which requires unbounded scores to express extreme cases and never produces exact zeros at finite parameters. The cosine-squared softmax obtained from Born-rule measurement reaches the same interior distributions and additionally includes the boundary, where attention weights are exactly zero at finite parameter values. Existing sparse-attention variants such as sparsemax~\cite{martins2016softmax,correia2019adaptively} and entmax~\cite{peters2019sparse} achieve hard zeros through projection or $\alpha$-entmax thresholding; the cosine-squared parameterization achieves the same boundary expressivity through a smooth function with no exponentials or underflow concerns. The temperature axis itself is realized physically in this construction: the number of repeated post-selected Born measurements is exactly the inverse temperature of the attention distribution (Corollary~\ref{cor:temperature}), and hard attention sits at a finite angle rather than at the zero-temperature limit of an anneal. Whether the cosine-squared form is competitive with these classical sparse-attention variants in practice is an empirical question the equivalence theorem motivates, independent of any quantum implementation.

\paragraph{Ancilla as a tool to replace classical operations.}
Residual connections in deep learning have taken several forms: highway gates~\cite{srivastava2015highway}, ResNet's pure additive identity~\cite{he2016deep}, gated linear units~\cite{dauphin2017language}, and various softer or learnable variants. The single-ancilla LCU construction maps each of these to a specific ancilla preparation: $|0\rangle$ for bypass, $|+\rangle$ for the additive identity, $|1\rangle$ for the full sublayer, and a parameterized superposition $\cos(\eta/2)|0\rangle + \sin(\eta/2)|1\rangle$ for a learnable gate. All four are realized by the same circuit topology, namely one ancilla, controlled-$U$, Hadamard, and post-select, differing only in the preparation angle. The unification places these four residual variants in a single one-parameter family. The ancilla also functions to carry out query and key block encoding, attention score inner product calculation using the Hadamard test, and Trotter decomposition to improve fidelity in single-gate noise which can increase with large angle rotations. All of these operations can be carried out by recycling the same ancilla with measurement or reset between algorithm steps.

\paragraph{Implications and open directions.}
For empirical work on quantum hardware, the construction identifies the operational primitives required on hardware: block encoding for the projections, the Hadamard test for inner products, and the gated single-ancilla LCU for the residual. These are all well-studied operations, but their composition into a transformer attention layer has not previously been laid out as an exact construction with verification. Trainability under barren-plateau constraints~\cite{mcclean2018barren,cerezo2021cost} is a real concern that the formal proof does not resolve; the proof says nothing about whether the variational parameters are easy to learn, only that the architecture has the expressivity to match its classical counterpart in principle.

A separate direction is on the classical side. The cosine-squared softmax with free angle parameterization is strictly more expressive than exponential softmax at finite parameter values, because it reaches boundary points of the simplex with exact zero attention weights. To our knowledge, cosine-squared softmax has not been studied as a drop-in attention parameterization. This suggests a separate classical direction: testing whether the same boundary expressivity that arises naturally from Born-rule normalization provides practical benefits in simplex-valued attention models. The probability simplex setting is the natural fit: classical generative models that operate directly on the simplex  (Riemannian flow matching~\cite{chen2024riemannian}, Dirichlet flow matching~\cite{avdeyev2024dirichlet}, and continuous diffusion for categorical data~\cite{dieleman2022continuous}) would be the most direct testbed for cos\textsuperscript{2}-softmax as a drop-in replacement for the standard exponential softmax in their attention layers. No quantum advantage is claimed; whether the cosine-squared form yields measurable improvements in classical training is an empirical question outside the scope of this paper.

\paragraph{Multi-head extension.}
The construction is single-head. Extension to multi-head attention is the standard parallel composition: $H$ independent copies of the single-head circuit run with independent variational parameters $\{\boldsymbol\Theta^{(h)}\}_{h=1}^{H}$ and independent ancilla registers, with the per-head outputs combined classically at readout (matching how multi-head attention concatenates head outputs in the classical setting). The mathematical content of this paper transfers head-by-head; what does not transfer is the hardware-resource accounting, which scales linearly with $H$ in qubit count and gate count. Whether parameter sharing across heads (e.g., a single block-encoded $W_Q$ with head-indexed projection) admits a more economical realization is a natural follow-up.

\section{Conclusion}
\label{sec:conclusion}

We have constructed an exact equivalence, in the infinite-shot limit, between the classical single-head attention layer with residual connection on the probability simplex and a quantum circuit in which every learnable parameter is a rotation-gate angle. The equivalence is a dictionary rather than a simulation: scores, softmax, temperature, value aggregation, and the gated residual each correspond to a specific quantum primitive (\Cref{fig:headline}, Table~\ref{tab:dictionary}), with the softmax and residual families strictly extending their classical counterparts and the temperature axis realized physically as measurement repetition. The classical side of the interface reduces to state preparation, compile-time parameter loading, one measure-and-reload step per attention score, and a terminal Born-rule measurement. Exact fully-coherent realization at finite depth is provably impossible for affinely-encoded scores (Theorem~\ref{thm:no-go}), while QSVT dressing recovers it to accuracy $\varepsilon$ (Theorem~\ref{thm:qsvt}); the obstruction in both directions is the single non-polynomial reparameterization $z \mapsto e^{z/2}$, paid either classically or coherently. The algebraic core of the dictionary is machine-checked in Lean 4 against mathlib.

The probability simplex is the domain in which Riemannian and Dirichlet flow matching~\cite{chen2024riemannian,avdeyev2024dirichlet} and categorical diffusion~\cite{dieleman2022continuous} already operate, so the primitives identified here connect directly to an active classical research thread. In particular, the cosine-squared softmax is a strictly larger family than exponential softmax at finite parameter values and may merit study as a classical parameterization in its own right.

\section{AI Use Statement}
Claude Opus 4.6 was used for Lean code generation for proof verification. Claude Opus 4.8 was used for simulated journal review/criticism, spell-check, and grammar editing.

\bibliographystyle{unsrt}
\bibliography{references}

\appendix
\section{Lean Code}
\label{sec:leancode}

Verification of Lemma~\ref{lem:rot-ansatz-realization}, part (2): the constituent gates at zero parameters are identity operations on every qubit state.

\begin{lstlisting}
theorem applyRy_zero (x : Qubit) : applyRy 0 x = x := by
  simp [applyRy]

theorem applyRz_zero (x : Qubit) : applyRz 0 x = x := by
  simp [applyRz]

theorem applyRot_zero (x : Qubit) : applyRot 0 0 0 x = x := by
  unfold applyRot
  rw [applyRz_zero, applyRy_zero, applyRz_zero]

theorem applyCRy_zero (control : Fin 2) (target : Qubit) :
    applyCRy 0 control target = target := by
  unfold applyCRy
  by_cases h : control = 1
  all_goals simp [h, applyRy_zero]

theorem variational_origin_is_identity
    (control : Fin 2) (x : Qubit) :
    applyRot 0 0 0 (applyCRy 0 control x) = x := by
  rw [applyCRy_zero, applyRot_zero]
\end{lstlisting}

Verification of Theorem~\ref{thm:isomorphism} of $\cos^2$-softmax and exp-softmax interior isomorphism.

\begin{lstlisting}
theorem master_interior_isomorphism {n : ℕ} (z : Fin n → Fin n → ℝ)
    (hz : ∀ i j, z i j ≤ 0) :
    ∀ i j, cosSqSoftmax (fun a b => angleFromScore (z a b)) i j
         = expSoftmax z i j :=
  fun i j => cosSqSoftmax_eq_expSoftmax_under_bijection z hz i j
\end{lstlisting}

Verification of Theorem~\ref{thm:strict-extension} that $\cos^2$-softmax strictly extends the range of exponential softmax.

\begin{lstlisting}
theorem strict_extension (n : ℕ) (hn : 1 < n) :
    ∃ θ : Fin n → Fin n → ℝ, ∀ z : Fin n → Fin n → ℝ,
      ∃ i j : Fin n, cosSqSoftmax θ i j ≠ expSoftmax z i j := by
  obtain ⟨θ, i, j, h_zero⟩ := exists_cosSqSoftmax_zero n hn
  refine ⟨θ, fun z => ⟨i, j, ?_⟩⟩
  rw [h_zero]
  exact (expSoftmax_pos z i j).ne
\end{lstlisting}

Verification of Theorem~\ref{thm:trotter} that post-selecting $L$ stages whose weights sum to $\beta$ multiplies to a single exp-softmax at inverse temperature $\beta$. Two specific cases are the unit-temperature case of Eq.~\ref{eq:classical} and the equal-weight case of Corollary~\ref{cor:temperature}.

\begin{lstlisting}[escapeinside={(*}{*)}]
theorem trotter_factorization {L : ℕ} (α : Fin L → ℝ) (z : ℝ)
    (hα_sum : ∑ l, α l = 1) :
    ∏ l, exp (α l * z) = exp z := by
  rw [(*$\leftarrow$*) exp_sum]
  congr 1
  rw [(*$\leftarrow$*) sum_mul, hα_sum, one_mul]

theorem trotter_post_selection {n L : ℕ} (z : Fin n → Fin n → ℝ)
    (α : Fin L → ℝ) (hα_sum : ∑ l, α l = 1) (i j : Fin n) :
    ∏ l, exp (α l * z i j) = exp (z i j) :=
  trotter_factorization α (z i j) hα_sum

theorem trotter_temperature {L : ℕ} (α : Fin L → ℝ) (z β : ℝ)
    (hα_sum : ∑ l, α l = β) :
    ∏ l, exp (α l * z) = exp (β * z) := by
  rw [(*$\leftarrow$*) exp_sum]
  congr 1
  rw [(*$\leftarrow$*) sum_mul, hα_sum]

theorem trotter_repeated_measurement (L : ℕ) (z : ℝ) :
    ∏ _l : Fin L, exp z = exp ((L : ℝ) * z) := by
  have h := trotter_temperature (fun _ : Fin L => (1 : ℝ)) z L (by simp)
  simpa using h
\end{lstlisting}

Theorem~\ref{thm:no-go} algebraic core verification that a polynomial cannot agree with a genuine exponential on an open interval.

\begin{lstlisting}
lemma hasDerivAt_expAffine (C a b σ : ℝ) :
    HasDerivAt (fun t => C * Real.exp (a * t + b))
      (a * (C * Real.exp (a * σ + b))) σ := by
  have h1 : HasDerivAt (fun t : ℝ => a * t + b) a σ := by
    simpa using ((hasDerivAt_id σ).const_mul a).add_const b
  have h2 := ((Real.hasDerivAt_exp (a * σ + b)).comp σ h1).const_mul C
  convert h2 using 1
  ring

theorem no_polynomial_matches_exponential
    (Q : Polynomial ℝ) (C a b s t : ℝ)
    (hC : C ≠ 0) (ha : a ≠ 0) (hst : s < t)
    (hagree : ∀ σ ∈ Set.Ioo s t, Q.eval σ = C * Real.exp (a * σ + b)) :
    False := by
  obtain ⟨σ₀, hσ₀⟩ := Set.nonempty_Ioo.mpr hst
  have hQne : Q ≠ 0 := by
    intro h
    have h0 := hagree σ₀ hσ₀
    rw [h, Polynomial.eval_zero] at h0
    exact (mul_ne_zero hC (Real.exp_ne_zero _)) h0.symm
  have hderiv : ∀ σ ∈ Set.Ioo s t,
      Q.derivative.eval σ = a * Q.eval σ := by
    intro σ hσ
    have hnhds : (fun y => Q.eval y)
        =ᶠ[nhds σ] (fun y => C * Real.exp (a * y + b)) :=
      Filter.eventually_of_mem (isOpen_Ioo.mem_nhds hσ) hagree
    have heq := hnhds.deriv_eq
    rw [(Polynomial.hasDerivAt Q σ).deriv,
        (hasDerivAt_expAffine C a b σ).deriv] at heq
    rw [heq, hagree σ hσ]
  have hRroots : {x | (Q.derivative - Polynomial.C a * Q).IsRoot x}.Infinite := by
    apply (Set.Ioo_infinite hst).mono
    intro σ hσ
    show (Q.derivative - Polynomial.C a * Q).IsRoot σ
    simp only [Polynomial.IsRoot, Polynomial.eval_sub, Polynomial.eval_mul,
      Polynomial.eval_C]
    rw [hderiv σ hσ]
    ring
  have hR0 : Q.derivative - Polynomial.C a * Q = 0 :=
    Polynomial.eq_zero_of_infinite_isRoot _ hRroots
  have hQd : Q.derivative = Polynomial.C a * Q := sub_eq_zero.mp hR0
  have hlt : Q.derivative.degree < Q.degree :=
    Polynomial.degree_derivative_lt hQne
  rw [hQd, Polynomial.degree_C_mul ha] at hlt
  exact lt_irrefl _ hlt
\end{lstlisting}

Verification of the channel-diagonal identity of Theorem~\ref{thm:weighted-sum} that loading the square roots of the nonnegative channel weights as amplitudes and reading out probabilities returns the direct weighted sum of the input values.

\begin{lstlisting}
noncomputable def valueFromChannel {n d : ℕ} (W : Fin d → Fin d → ℝ)
    (x : Fin n → Fin d → ℝ) (j : Fin n) (k : Fin d) : ℝ :=
  ∑ m, W k m * x j m

theorem channel_diagonal {n d : ℕ} (W : Fin d → Fin d → ℝ)
    (hW : ∀ k m, 0 ≤ W k m) (x : Fin n → Fin d → ℝ)
    (j : Fin n) (k : Fin d) :
    ∑ m, x j m * (Real.sqrt (W k m)) ^ 2 = valueFromChannel W x j k := by
  unfold valueFromChannel
  apply Finset.sum_congr rfl
  intro m _
  rw [Real.sq_sqrt (hW k m), mul_comm]
\end{lstlisting}

Sanity check of correctness of LCU circuit from Theorem~\ref{thm:gated-residual}.

\begin{lstlisting}
theorem lcu_circuit_correct {N : ℕ} (η : ℝ) (U : Fin N → Fin N → ℂ)
    (X : State N) (i : Fin N) :
    lcuCircuit η U X i =
      ((1 : ℂ) / (Real.sqrt 2 : ℂ)) *
      ((Real.cos (η / 2) : ℂ) * X i +
       (Real.sin (η / 2) : ℂ) * applyU U X i) := by
  have h0 : ((⟨0, by omega⟩ : Fin 2).val = 0) = True := by simp
  have h1 : ((⟨1, by omega⟩ : Fin 2).val = 0) = False := by decide
  unfold lcuCircuit projectZero hadamardAncilla
  simp only [h0, if_true]
  unfold controlledU
  simp only [h0, h1, if_true, if_false]
  unfold initState
  simp only [h0, h1, if_true, if_false]
  unfold applyU
  rw [show ∑ j, U i j * ((Real.sin (η/2) : ℂ) * X j) =
         (Real.sin (η/2) : ℂ) * ∑ j, U i j * X j from by
    rw [Finset.mul_sum]; congr 1; funext j; ring]
\end{lstlisting}

Certification of three key claims of interior isomorphism (Theorem~\ref{thm:isomorphism}), strict extension (Theorem~\ref{thm:strict-extension}), and Trotter algebraic identity (Theorem~\ref{thm:trotter}).

\begin{lstlisting}[escapeinside={(*}{*)}]
theorem single_head_master_summary (n : ℕ) (hn : 1 < n)
    (z : Fin n → Fin n → ℝ) (hz : ∀ i j, z i j ≤ 0) :
    -- 1. Interior isomorphism:
    (∀ i j, cosSqSoftmax (fun a b => angleFromScore (z a b)) i j =
            expSoftmax z i j) ∧
    -- 2. Strict extension at boundary:
    (∃ θ : Fin n → Fin n → ℝ, ∀ z' : Fin n → Fin n → ℝ,
       ∃ i j : Fin n, cosSqSoftmax θ i j ≠ expSoftmax z' i j) ∧
    -- 3. Trotter algebraic identity for exp-softmax:
    (∀ (L : ℕ) (α : Fin L → ℝ) (i j : Fin n),
       (∑ l, α l = 1) → ∏ l, exp (α l * z i j) = exp (z i j)) := by
  refine ⟨?_, ?_, ?_⟩
  (*$\cdot$*) exact master_interior_isomorphism z hz
  (*$\cdot$*) exact strict_extension n hn
  (*$\cdot$*) intro L α i j hα
    exact trotter_post_selection z α hα i j
\end{lstlisting}

Algebraic level verification of Theorem~\ref{thm:equiv} which claims complete single-head attention equivalence.

\begin{lstlisting}
def classicalAttentionOutput {n d : ℕ}
    (z : Fin n → Fin n → ℝ) (V : Fin n → Fin d → ℝ)
    (x : Fin n → Fin d → ℝ) (i : Fin n) (k : Fin d) : ℝ :=
  ((∑ j, expSoftmax z i j * V j k) + x i k) / 2

def quantumAttentionOutput {n d : ℕ}
    (z : Fin n → Fin n → ℝ) (V : Fin n → Fin d → ℝ)
    (x : Fin n → Fin d → ℝ) (i : Fin n) (k : Fin d) : ℝ :=
  ((∑ j, cosSqSoftmax (fun a b => angleFromScore (z a b)) i j * V j k)
    + x i k) / 2

theorem full_single_head_equivalence {n d : ℕ}
    (z : Fin n → Fin n → ℝ) (hz : ∀ i j, z i j ≤ 0)
    (V : Fin n → Fin d → ℝ) (x : Fin n → Fin d → ℝ)
    (i : Fin n) (k : Fin d) :
    quantumAttentionOutput z V x i k = classicalAttentionOutput z V x i k := by
  unfold quantumAttentionOutput classicalAttentionOutput
  congr 1
  congr 1
  apply Finset.sum_congr rfl
  intro j _
  rw [master_interior_isomorphism z hz i j]

theorem full_single_head_equivalence_channel {n d : ℕ}
    (z : Fin n → Fin n → ℝ) (hz : ∀ i j, z i j ≤ 0)
    (W : Fin d → Fin d → ℝ) (hW : ∀ k m, 0 ≤ W k m)
    (x : Fin n → Fin d → ℝ) (i : Fin n) (k : Fin d) :
    ((∑ j, cosSqSoftmax (fun a b => angleFromScore (z a b)) i j *
        (∑ m, x j m * (Real.sqrt (W k m)) ^ 2)) + x i k) / 2
      = classicalAttentionOutput z (valueFromChannel W x) x i k := by
  have h : ∀ j : Fin n,
      (∑ m, x j m * (Real.sqrt (W k m)) ^ 2) = valueFromChannel W x j k :=
    fun j => channel_diagonal W hW x j k
  simp_rw [h]
  exact full_single_head_equivalence z hz (valueFromChannel W x) x i k
\end{lstlisting}

\end{document}